\newcommand{\thr}{\ensuremath{t}} 
\newcommand{\TSS}{\textsc{Target Set Selection}\xspace}
\newcommand{\TSSshort}{\textsc{TSS}\xspace}
\newcommand{\VC}{\textsc{Vertex Cover}\xspace}
\newcommand{\VCshort}{\textsc{VC}\xspace}
\newcommand{\IS}{\textsc{Independent Set}\xspace}
\newcommand{\ISshort}{\textsc{IS}\xspace}
\newcommand{\SAT}{\textsc{Sat}\xspace}
\newcommand{\tSAT}{\textsc{3-Sat}\xspace}
\newcommand{\pSAT}{\textsc{Planar Sat}\xspace}
\newcommand{\ptSAT}{\textsc{Planar 3-Sat}\xspace}
\newcommand{\rptSAT}{\textsc{Restricted Planar $3$-Sat}\xspace}
\renewcommand{\P}{\textsf{P}\xspace}
\newcommand{\NP}{\textsf{NP}\xspace}
\newcommand{\NPh}{\NP-hard\xspace}
\newcommand{\NPhness}{\NP-hardness\xspace}
\newcommand{\NPc}{\NP-complete\xspace}
\newcommand{\N}{\ensuremath{\mathbb{N}}}
\newcommand{\Oh}[1]{\ensuremath{{\mathcal{O}(#1)}}}
\newcommand{\oh}[1]{\ensuremath{{o(#1)}}}
\theoremstyle{plain}
\newtheorem{theorem}{Theorem}
\newtheorem*{theorem*}{Theorem}
\newtheorem{lemma}[theorem]{Lemma}
\newtheorem*{lemma*}{Lemma}
\newtheorem{corollary}[theorem]{Corollary}
\newtheorem*{corollary*}{Corollary}
\newtheorem{observation}[theorem]{Observation}
\newtheorem*{observation*}{Observation}
\newtheorem*{fact*}{Fact}
\newtheorem*{conjecture*}{Conjecture}
\theoremstyle{definition}
\newtheorem{definition}[theorem]{Definition}
\newtheorem{remark}[theorem]{Remark}
\newtheorem*{remark*}{Remark}
\newtheorem*{example*}{Example}
\newtheorem{claim}[theorem]{Claim}
\newtheorem*{claim*}{Claim}
\newtheorem*{question*}{Question}
\crefname{claim}{Claim}{Claims}
\crefname{observation}{Observation}{Observations}
\crefname{corollary}{Corollary}{Corollaries}
\Crefname{figure}{Fig.}{Figs.}
\Crefname{tabular}{Tab.}{Tabs.}
\Crefname{tabularx}{Tab.}{Tabs.}
\Crefname{table}{Tab.}{Tabs.}
\newcommand{\mycbox}[1]{\tikz{\path[draw=#1,fill=#1] (0,0) rectangle (0.2cm,0.2cm);}}
\newcommand{\myfbox}[2]{\tikz{\path[draw=#2,fill=#1,line width=0.07cm] (0,0) rectangle (0.2cm,0.2cm);}}
\newcommand\mlnode[1]{\fbox{\begin{tabular}{@{}c@{}}#1\end{tabular}}}
\def\blfootnote{\gdef\@thefnmark{}\@footnotetext}
\author[Michal Dvořák et. al]{Michal Dvořák\affiliationmark{1}
  \and Dušan Knop\affiliationmark{1}
  \and Šimon Schierreich\affiliationmark{1}}
\title[On the Complexity of Target Set Selection]{On the Complexity of Target Set Selection in Simple Geometric Networks}
\affiliation{
  Department of Theoretical Computer Science, Faculty of Information Technology, Czech Technical University in Prague, Prague, Czech Republic
}
\keywords{disease spread, Target Set Selection, intersection graphs, computational complexity}
\begin{document}



\publicationdata{vol. 26:2}{2024}{11}{10.46298/dmtcs.11591}{2023-07-17; 2023-07-17; 2024-03-03; 2024-06-11}{2024-06-27}


\maketitle

\begin{abstract}
  \medskip
  We study the following model of disease spread in a social network. At first, all individuals are either \emph{infected} or \emph{healthy}. Next, in discrete rounds, the disease spreads in the network from infected to healthy individuals such that a healthy individual gets infected if and only if a sufficient number of its direct neighbors are already infected.
  
  We represent the social network as a graph. Inspired by the real-world restrictions in the recent epidemic, especially by social and physical distancing requirements, we restrict ourselves to networks that can be represented as geometric intersection graphs.
  
  We show that finding a minimal vertex set of initially infected individuals to spread the disease in the whole network is computationally hard, already on unit disk graphs. Hence, to provide some algorithmic results, we focus ourselves on simpler geometric graph classes, such as interval graphs and grid graphs.
\end{abstract}

%
%

\section{Introduction}

\blfootnote{This is an extended and revised version of a preliminary conference report that was presented at the 18th International Workshop on Algorithms and Models for the Web Graph, WAW~'23~\citep{DvorakKS23}.}

In this work, we study the following deterministic model of disease spread. We are given a social network represented as a simple, undirected graph~${G=(V,E)}$, a threshold function $t\colon V\to\N$ that associates each agent with her \emph{immunity} (or \emph{threshold}), and a \emph{budget} $k\in\N$. Our goal is to select a group $S\subseteq V$, $|S| \leq k$, of initially infected agents (a \emph{target set}) such that all agents get infected by the following activation process:
\begin{align*}
S_0 &= S,\\
S_{i} &= S_{i-1}\cup\{v\in V \mid \thr(v) \leq |N(v) \cap S_{i-1}|\}.
\end{align*}

In other words, the disease spreads in discrete rounds. A healthy agent $v$ becomes infected if the number of neighbors already infected reaches the agent's immunity value $\thr(v)$. We note that once an agent is infected, she remains in this state for the rest of the process. We shall also refer to infected agent as \emph{activated} or \emph{active} agent.

\cite{DreyerR2009} studied a similar model under the name \textsc{Irreversible $k$-threshold Process}. Unlike our setting, in their work, the immunity value is the same for all agents. Therefore, the presented model is more general and corresponds, in fact, to the \TSS problem (\TSSshort for short) where thresholds can be agent-specific.

The \TSS problem was introduced by \cite{DomingosR2002} in the context of viral marketing on social networks. \cite{KempeKT2015} later refined the problem in terms of thresholds, which is the model we follow in this work, and showed that the problem is \NPh.

The \NPhness of \TSSshort comes from the observation that for $t(v) = \deg(v)$ the problem is equivalent to the \VC problem. This setting of the threshold function is referred to as \emph{unanimous thresholds}.

The first way to tackle the complexity of the problem was aimed at the threshold function. However, \cite{Chen2009} showed that \TSSshort remains \NPh even if all thresholds are at most two, which extends the previous result of \cite{DreyerR2009} who showed that the problem is \NPh even if all thresholds are bounded by a constant $c\geq 3$. \NPhness for majority thresholds (for every $v\in V$ we have $\thr(v) = \lceil\deg(v)/2\rceil$) is due to \cite{Peleg1996}.

It is easy to see that the \TSSshort problem is solvable in polynomial time if the underlying graph has diameter one, that is, it is a complete graph~\citep{NichterleinNUW2013,ReddyKR2010}. \cite{Chen2009} showed that the problem remains polynomial-time solvable when the underlying graph is a tree. Later, \cite{ChiangHBWY2013} proposed linear-time algorithms for block-cactus graphs, chordal graphs with all thresholds at most two, and Hamming graphs with all thresholds equal to two. \cite{BessyEPR2019} showed that the \TSSshort problem is solvable in polynomial time on interval graphs if all thresholds are bounded by a constant. On the other hand, the problem becomes \NPh on graphs of diameter two~\citep{NichterleinNUW2013}.

The problem becomes solvable in polynomial time when the input graph is \mbox{$3$-regular} and all thresholds are equal to $2$. This setting is in fact equivalent to the \textsc{Feedback Vertex Set} problem~\citep{TakaokaU2015}, which is solvable in polynomial time on \mbox{$3$-regular} graphs~\citep{UenoYS1988}. More generally, setting where $t(v)=\deg(v)-1$ for all vertices $v$ is equivalent to the \textsc{Feedback Vertex Set} problem. Even more generally, \TSSshort with threshold function $t(v) = \deg(v) - d$ for all vertices $v$ and some $d\geq 0$ is equivalent to the \mbox{\textsc{$d$-Degenerate Vertex Deletion}} problem, where the task is to delete a set of vertices of size at most $k$ such that the remaining graph is $d$-degenerate. The setting with thresholds equal to $2$ was further examined by \cite{KynclLV2017}. They extended the tractability result for \TSSshort with thresholds equal to $2$ when the input graph has degree at most $3$ and showed \NPhness when the input graph has maximum degree at most $4$. The study of fast exponential algorithms for \TSSshort with constant thresholds is due to~\cite{BliznetsS2023}.

Restriction of the underlying graph structure was further investigated by \cite{BenZwiHLN2011}. They gave an algorithm running in $n^\Oh{\omega}$ time for networks with $n$ vertices and \emph{tree-width}
bounded by $\omega$, and showed that, under reasonable theoretical assumptions, there is no algorithm for \TSSshort running in $n^\oh{\sqrt{w}}$ time. The parameterized complexity perspective, initiated by \cite{BenZwiHLN2011}, was later used in multiple subsequent works~\citep{Mathieson2010,NichterleinNUW2013,ChopinNNW2014,Hartmann2018,DvorakKT2022,KnopSS2022,BanerjeeMP22,ChuL2023}.

Very recently, \cite{Schierreich2023} initiated the study of \TSSshort in dynamic environments such as temporal graphs. As social networks naturally change over time, the model with temporal graphs captures more realistically their dynamics. This variant was further investigated by~\cite{DeligkasEGS2023}. 

Finally, \cite{CicaleseCGMV2014} proposed a study of the \TSSshort problem where the process must stabilize within a prescribed number of rounds. They gave a polynomial-time algorithm for graphs of bounded \emph{clique-width} 
and a linear-time algorithm for trees. Somewhat opposite is the goal in the variant recently introduced by \cite{KeilerLMSS2023}, who studies the complexity of finding processes that lasts at least $k$ rounds.

\medskip

Inspired by the actual restrictions in the recent epidemic, especially by social and physical distancing requirements, we study the \TSS problem restricted to instances where the underlying graph is a \emph{(unit) disk graph}.

Unit disk graphs were initially used as a natural model for the topology of \mbox{ad-hoc} wireless communication networks~\citep{HusonS1995}. For a given graph, it is \NPh to recognize whether the graph is a unit disk graph~\citep{BreuK1993,HlinenyK2001,KangM2012}. On the other hand, many computationally hard problems, such as \IS or \textsc{Colouring}, can be efficiently approximated for this graph class~\citep{Matsui2000}. \textsc{Clique} can be solved even in polynomial time if the disk representation is given as part of the input~\citep{ClarkCJ1990}.

In our case, the disk representation models two different situations. In the first situation, the disk represents the distances that individuals must keep. In the second case, the disk represents the area in which the disease is spread by an infected individual.

As the \TSS problem is notoriously hard from both exact computation and approximation point of view, it is natural to ask whether any positive result can be given if we restrict \TSSshort to instances where the underlying graph is a unit disk graph or if we need to restrict ourselves to even simpler graph classes.

%
%
\subsubsection*{Our Contribution}
In this paper, we further extend the complexity picture of \TSS problem and show that it is computationally hard even on very simple geometric graph classes. In particular, we show that \TSS is \NPc in the class of unit disk graphs even if the threshold function is bounded by a constant $c\geq 2$, is equal to majority, or is unanimous. Hence, we focus on the study of grid graphs, which is a subclass of unit disk graphs. For grid graphs, we show that \TSS is \NPc for thresholds at most $2$ even when the maximum degree is at most~$3$. As a corollary, we show an \NPhness result for \TSSshort with majority thresholds for the class of grid graphs. Lastly, we give \NPhness results for the case when thresholds are set to a constant. We show that \TSS is \NPc in the class of unit disk graphs when $t(v)=2$ and $\Delta G \leq 4$. Note that our results for constant thresholds establish a clear dichotomy between tractable and intractable subclasses of geometric intersection graph classes, as \TSSshort is known to be solvable in polynomial time on interval graphs with constant thresholds~\citep{BessyEPR2019} and unanimous thresholds~\citep{Farber1982} (i.e. the \VC problem). As a byproduct of our theorems we also obtain the full complexity picture of \TSS in the class of planar graphs. We emphasize that most of our reductions yield a graph with very small maximum degree and unless $\P=\NP$ the maxium degree cannot be further reduced. For a graphical overview of our results we refer the reader to \Cref{tbl:our_contribution} and \Cref{fig:classes}.

\subsubsection*{Paper Organization}
The remainder of this paper is organized as follows. In \Cref{sec:preliminaries}, we introduce all the definitions and notation used throughout the paper. In \Cref{sec:unanimous_thresholds}, we show the hardness and algorithmic results for \TSS restricted to unanimous thresholds. \Cref{sec:constant_thresholds} is dedicated to the variant where the maximum threshold is bounded by a constant. In \Cref{sec:majority_thresholds}, we study a variant of the \TSS problem with majority thresholds. In \Cref{sec:exact_thresholds}, we explore the difference between constant thresholds and exact thresholds and we conclude the paper with open problems and future research directions in \Cref{sec:conclusions}.

\begin{table}[ht!]
	
	{\small
		\newcolumntype{Y}{>{\centering\arraybackslash}X}
		\renewcommand{\arraystretch}{1.5}
		\begin{tabularx}{\textwidth}{|p{2.5cm}|X|X|X|X|}\hline
			& constant & majority & unanimous & unrestricted \\
			\hline
			
			\textbf{interval graphs}   & 
			\P~\citep{BessyEPR2019} & 
			? & 
			\P~\citep{Farber1982} & 
			? \\\hline
			
			\textbf{grid graphs}       & 
			\NP-c (Thm.~\ref{thm:tss_npc_constant_grid}) & 
			\NP-c (Cor.~\ref{cor:tss_npc_majority_grid}) & 
			\P~\citep{ClarkCJ1990} & 
			\NP-c (Thm.~\ref{thm:tss_npc_constant_grid}) \\\hline
			
			\textbf{unit disk graphs}  & 
			\NP-c (Cor.~\ref{cor:tss_npc_constant_udg}) & 
			\NP-c  (Cor.~\ref{cor:tss_npc_majority_udg}) & 
			\NP-c~\citep{ClarkCJ1990} & 
			\NP-c~\citep{ClarkCJ1990} \\\hline
			
			
			\textbf{planar graphs} &
			\NP-c (Thm. \ref{thm:tss:planar:npc:const}) &
			\NP-c (Cor. \ref{cor:tss_npc_majority_planar})&
			\NP-c~\citep{FleischnerSS2010}&
			\NP-c~\citep{FleischnerSS2010}\\
			\hline
		\end{tabularx}
	}
	\caption{Overview of our results. The first row contains individual restrictions of the threshold function, and the first column contains assumed graph classes. In the table, ``\NP-c'' stands for ``\NPc'', ``\P'' stands for polynomial-time solvable cases, and ``?'' indicates an open question. The new results from this paper are marked with a reference to the appropriate statement.}
	\label{tbl:our_contribution}
\end{table}

\begin{figure}[ht!]
	\centering
	\begin{tikzpicture}
		\node (a) at (0,0){
			$$
			\begin{tikzcd}[ampersand replacement=\&]
				\& \mlnode{DISK \\ \myfbox{red}{gray} {}\myfbox{red}{black} {}\myfbox{red}{black} {}\myfbox{red}{gray} }        \&                           \\
				\mlnode{PLANAR \\ \myfbox{red}{gray} {}\myfbox{red}{black} {}\myfbox{red}{black} {}\myfbox{red}{gray}}\arrow[ru]             \& \mlnode{UDISK \\ \myfbox{red}{gray} {}\myfbox{red}{black} {}\myfbox{red}{black} {}\myfbox{red}{gray}}\arrow[u] \& \mlnode{INT \\ \myfbox{green}{gray} {}\mycbox{green}{} {}\mycbox{yellow}{} {}\mycbox{yellow}{}} \arrow[lu]            \\
				\mlnode{GRID \\ \myfbox{green}{gray} {}\myfbox{red}{black} {}\myfbox{red}{black} {}\myfbox{red}{black}} \arrow[ru] \arrow[u] \&               \& \mlnode{UINT \\ \myfbox{green}{gray} {}\mycbox{green}{} {}\mycbox{yellow}{} {}\mycbox{yellow}{}} \arrow[u] \arrow[lu]
			\end{tikzcd}
			$$
		};
	\end{tikzpicture}
	\caption{Overview of our main results regarding \TSS for \emph{unanimous}, \emph{constant}, \emph{majority}, and \emph{unrestricted} threshold function. Red squares correspond to \NPhness, green correspond to polynomial-time solvability and yellow indicate an open question. Squares with borders correspond to results established in this work. Black squares indicate a main result while the gray indicate a direct or trivial corollary to a previously known result. An arrow from a class $\mathcal{A}$ to class $\mathcal{B}$ corresponds to the fact that $\mathcal{A}$ is a subclass of $\mathcal{B}$. The six classes in the picture are (by row) disk graphs, planar graphs, unit disk graphs, interval graphs, grid graphs, unit interval graphs. }\label{fig:classes}
\end{figure}
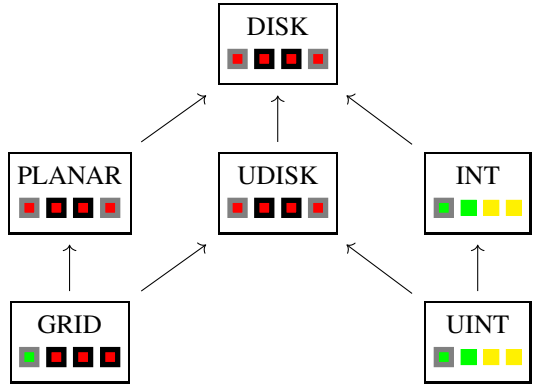

%
%
\section{Preliminaries}\label{sec:preliminaries}

For $n\in\mathbb{N}$ we denote $[n]=\{1,\ldots,n\}$, in particular, $[0]=\emptyset$. For a set $X$ and a constant $c\in\mathbb{N}$, the symbol $X^{\geq c}$ denotes the set of all $d$-tuples from $X$ where $d\geq c$.
A \emph{simple undirected graph} is a pair $G=(V,E)$, where $V$ is a set of \emph{vertices}, and $E \subseteq \binom{V}{2}$ is a set of \emph{edges}. Let $u$ and $v$ be two distinct vertices. If $\{u,v\}\in E$, then we call~$u$ a~\emph{neighbor} of~$v$ and vice versa. We denote the \emph{open neighborhood} of a~vertex~$v$ by $N(v)$ and $|N(v)|=\deg(v)$ is the \emph{degree} of the vertex~$v$. For $A\subseteq V$ the open neighborhood of $A$ is $N(A)=\bigcup_{v\in A}N(v)$. A vertex of degree~$1$ is called a \emph{leaf}. The \emph{maximum degree} of a graph $G$ is denoted by $\Delta G$.
Let~$r\in\N$ be a constant. We say that a graph $G$ is \emph{$r$-regular} if every vertex $v\in V$ has degree exactly $r$. A graph is \emph{regular}, if it is $r$-regular for some constant $r\in\N$.

\begin{definition}[Unit disk graph]
	A graph $G=(V,E)$ with $V=\{v_1,\ldots,v_n\}$ is a \emph{disk graph} if there exists a collection $\mathcal{D}=(D_1,\ldots,D_n)$ of $n$ closed disks in the Euclidean plane such that $\{v_i,v_j\}\in E$ if and only if $D_i \cap D_j \neq \emptyset$. If all disks $D_i\in\mathcal{D}$ have same diameter, we call the graph a \emph{unit disk graph}.
\end{definition}

Let $G$ and $H$ be two graphs. The \emph{Cartesian product} of the graphs $G$ and $H$ is a graph $G\square H$ such that $V(G\square H) = V(G)\times V(H)$ and $\{(u,u'),(v,v')\}$ is an edge if and only if either $u=v$ and $u'$ is a neighbor of $v'$ in $H$, or~$u'=v'$ and~$u$ is a neighbor of $v$ in $G$.

\begin{definition}[Grid graph]
	An $n\times m$ \emph{grid} is the Cartesian product of the path graphs $P_n$ and $P_m$. A graph $G$ is a \emph{grid graph} if and only if it is an induced subgraph of a grid.
\end{definition}

\begin{lemma}\label{lem:grids_are_udg}
	Every grid graph is a unit disk graph.
\end{lemma}
\begin{proof}
	Let $G$ be a grid graph. By definition, it is an induced subgraph of the cartesian product $P_n\square P_m$. Embed $G$ into the integer grid $[n]\times [m]\subseteq \mathbb{Z}\times \mathbb{Z}$ in the obvious way. Unit disk representation of $G$ is as follows. For each vertex $v\in V(G)$ place a disk with diameter $1$ centered at the corresponding grid point. Notice that since $G$ is an induced subgraph of a grid, all adjacencies are preserved and on the other hand every adjacency in the unit disk representation corresponds to an adjacency in the original grid graph~$G$.
\end{proof}

Let $G=(V,E)$ be a graph. We call a set $C\subseteq V$ a \emph{vertex cover} of~$G$, if at least one end of each edge is a member of $C$. In the \VC (\VCshort for short) problem, we are given a graph $G$ and an integer $k\in\N$, and our goal is to decide whether there is a vertex cover $C$ of size at most $k$. A set $I\subseteq V$ is called an \emph{independent set}, if for every pair of distinct vertices $u,v\in I$ there is no edge connecting~$u$ and $v$. In the \IS problem (\ISshort for short), we are given a graph $G$ and an integer $k\in\N$, and our goal is to decide whether there is an independent set $I$ of size at least $k$.

Some of our \NPhness reductions come from a variant of the \SAT problem. In the \SAT problem, we are given a propositional formula $\varphi$ in conjunctive normal form (CNF) over the set $\operatorname{Var}(\varphi)$ of \emph{variables}. The set of clauses is denoted by $\mathcal{C}(\varphi)$. Our goal is to decide whether there is a truth assignment $\pi\colon \operatorname{Var}(\varphi)\to\{0,1\}$ that satisfies~$\varphi$. In \tSAT, all clauses are restricted to be of size at most $3$. An \emph{incidence graph} for formula $\varphi$ is a bipartite graph $G_\varphi$ with vertices $v_C$ for each clause $C\in \mathcal{C}(\varphi)$ and $v_x$ for each variable $x\in \operatorname{Var}(\varphi)$ and there is an edge between $x\in \operatorname{Var}(\varphi)$ and $C\in\mathcal{C}(\varphi)$ if and only if the variable $x$ occurs in the clause~$C$. A variant of \SAT where $G_\varphi$ is planar is called \pSAT. \ptSAT is combination of the two settings mentioned above. In \rptSAT it is further assumed that each variable $x_i$ occurs exactly $3$ times -- twice as a positive literal, and once as a negative literal. This variant of \SAT is \NPh~\cite[pf. Theorem 2a]{DahlhausJPSY1994}

%
%
\section{Unanimous Thresholds}\label{sec:unanimous_thresholds}

In this section, we study the special case of the \TSS problem where the thresholds are unanimous. Recall that this means that for every vertex $v$ we have $t(v) = \deg (v)$. All our results strongly rely on the following easy-to-see and well-known equivalence between the \TSS problem with unanimous thresholds and the \VC problem. Proof of this can be found, for example, in the work of \cite{Chen2009}.

\begin{lemma}\label{lem:tss:unanimous:vc:eq}
	The \TSS problem with unanimous thresholds is equivalent to the \VC problem.
\end{lemma}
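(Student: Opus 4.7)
The plan is to show that a set $S \subseteq V$ is a target set for the unanimous-threshold instance on $G$ if and only if $S$ is a vertex cover of $G$. Since the two problems ask for a minimum such set (or equivalently a yes/no with budget $k$), this equivalence immediately yields a parameter-preserving equivalence in both directions.

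First I would unpack what unanimous thresholds mean for the activation process: since $\thr(v) = \deg(v)$, a vertex $v \notin S$ can only become infected at round $i$ once \emph{every} neighbour of $v$ already lies in $S_{i-1}$. I would record this single observation as the engine of both implications.

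For the easy direction, suppose $S$ is a vertex cover. Then every edge has an endpoint in $S$, so for each $v \in V \setminus S$, the whole neighbourhood $N(v)$ is contained in $S = S_0$. Hence $|N(v) \cap S_0| = \deg(v) = \thr(v)$ and $v \in S_1$, meaning the process infects $V$ in a single round; thus $S$ is a target set. For the converse, assume $S$ is a target set and suppose for contradiction that $V \setminus S$ is not independent, so there is an edge $\{u,v\}$ with $u,v \notin S$. Consider the first round~$i \geq 1$ at which \emph{any} vertex of $V \setminus S$ becomes infected, and let $w \in V \setminus S$ be such a vertex, so $w \in S_i \setminus S_{i-1}$. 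By the observation, $N(w) \subseteq S_{i-1}$. But $w$ has at least one neighbour in $V \setminus S$ (take $v$ if $w=u$, $u$ if $w=v$, and otherwise note that the edge $\{u,v\}$ itself cannot be covered by $S$); let this neighbour be $w'$. Then $w' \in S_{i-1} \setminus S \subseteq S_{i-1} \setminus S_0$, so $w'$ was activated at some round $\leq i-1 < i$, contradicting the minimality of $i$.

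The main obstacle, such as it is, is just stating the correspondence cleanly; once the ``first newly-infected vertex of $V \setminus S$'' argument is written down, both directions are one-liners. I would conclude by remarking that since the bijection between candidate target sets and candidate vertex covers is the identity, it preserves cardinality exactly, so the two decision problems (and their minimization versions) are equivalent.
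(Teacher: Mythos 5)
Your overall approach is the same as the paper's: the correspondence between target sets and vertex covers is the identity map, a vertex cover infects everything in a single round, and conversely a target set must be a vertex cover because under unanimous thresholds a vertex outside $S$ can only be activated once its entire neighbourhood already is. The forward direction is correct. The converse, however, has a genuine flaw in the choice of the extremal vertex. You let $i$ be the first round at which \emph{any} vertex of $V\setminus S$ becomes infected, pick such a vertex $w$, and then assert that $w$ has a neighbour in $V\setminus S$. That assertion is unjustified, and in general false, when $w\notin\{u,v\}$: the vertices activated in the earliest round are exactly those whose whole neighbourhood already lies in $S$, so the first-activated $w$ will typically have \emph{no} neighbour outside $S$, and your parenthetical ``otherwise note that the edge $\{u,v\}$ itself cannot be covered by $S$'' says nothing about $w$. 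In that case your argument produces no contradiction.

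The repair is immediate and recovers essentially the paper's (terser) argument: quantify over the endpoints of the uncovered edge rather than over all of $V\setminus S$. Since $S$ is a target set, both $u$ and $v$ are eventually infected; let $i\geq 1$ be the first round at which one of them, say $u$, becomes infected. Unanimity gives $N(u)\subseteq S_{i-1}$, hence $v\in S_{i-1}$; since $v\notin S_0$, the vertex $v$ was activated at some round $j$ with $1\leq j\leq i-1<i$, contradicting the minimality of $i$. With this substitution the proof is correct and coincides with the paper's.
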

It is not hard to see that \TSSshort is indeed in \NP. A valid target set is a valid \NP certificate. In fact, all \NPh problems we deal with in this work are trivially in \NP. We always state \NP-completeness, however, we omit the part with \NP containment because it is trivial.

It is known that the \VC problem is \NPc on unit disk graphs~\citep{ClarkCJ1990}, $3$-regular planar graphs~\citep{FleischnerSS2010}, and polynomial-time solvable on interval graphs~\citep{Farber1982} and bipartite graphs by using maximum matching~\citep{Konig1931,ChuzhoyK2024}. Note that grid graphs are bipartite. By \Cref{lem:tss:unanimous:vc:eq}, the same tractability results hold for \TSS with unanimous thresholds.

%
%
\section{Constant Thresholds}\label{sec:constant_thresholds}

The \TSS problem seems to be intractable on unit disk graphs when the threshold function is unrestricted or unanimous. It is now natural to ask whether the problem remains \NPh even under some natural restrictions of the threshold function.

In this section, we show that \TSSshort is \NPc when the underlying graph is a~unit disk graph and all thresholds are bounded by a constant $c\geq 2$. Note that the case where all thresholds are at most $1$, is trivial. It is sufficient (and necessary) to choose one vertex per connected component without a vertex of threshold $0$ of the input graph. This can be accomplished in linear time.

Before delving into the proofs, we give a historical remark. If the input graph is unrestricted, the \NPhness of \TSSshort for constant thresholds is implied by the hardness of \VC and \IS in the class of $k$-regular graphs for $k\geq 3$ as noted by~\cite{DreyerR2009}. Indeed, this can be also regarded as the unanimous threshold setting. However, this is only applicable for thresholds bounded by $c\geq 3$. For thresholds bounded by $2$ in general graphs the hardness is either implied by the inapproximability result of~\cite{Chen2009} or is achieved via involved reduction from the \SAT problem~\citep{Centeno2011, KynclLV2017}. Although we eventually show hardness even for thresholds bounded by $2$ (by reduction from a special variant of \SAT), this section shows that the very same approach could be used even in the class of unit disk graphs. 

In what follows we will employ reductions from problems involving planar graphs. To effectively apply such reductions, it will be essential to have some kind of 'nice' representation of the planar graph. One such representation useful for our purposes is the so-called \emph{rectilinear embedding}. This planar embedding ultimately helps us with further reductions to the class of grid graphs.

\begin{definition}
	Given a planar graph $G=(V,E)$, a \emph{rectilinear embedding~(of~$G$)} is a planar drawing of $G$ such that vertices occupy integer coordinates, and all edges are made of (possibly more) line segments of the form $x=i$ or $y=j$ for some integers $i,j$ (i.e., the line segments are parallel to the coordinate axes).
\end{definition}

Formally, a rectilinear embedding of a planar graph $G=(V,E)$ is a pair of mappings $(\mathcal{E}^V,\mathcal{E}^E)$ where $\mathcal{E}^V\colon V\rightarrow \mathbb{Z}\times \mathbb{Z}$ and
$\mathcal{E}^E\colon E\rightarrow (\mathbb{Z}\times \mathbb{Z})^{\geq 2}$ and the following conditions must hold:
\begin{enumerate}[label=\roman*)]
	\item The mapping $\mathcal{E}^V$ is injective.
	\item For any edge $e$, the tuple $\mathcal{E}^E(e)=(p_1,p_2,\ldots, p_g)$ induces a simple polygonal chain and for all $i\in[g-1]$ the points $p_i,p_{i+1}$ are adjacent grid points. More precisely, if $p_i=(x_i,y_i)$ and $p_{i+1}=(x_{i+1},y_{i+1})$, then either $x_i=x_{i+1}\wedge |y_i-y_{i+1}|=1$ or $y_i=y_{i+1}\wedge |x_i-x_{i+1}|=1$.
	\item For any edge $e=\{u,v\}$, the points $\mathcal{E}^V(u)$ and $\mathcal{E}^V(v)$ are endpoints of the polygonal chain induced by $\mathcal{E}^E(e)$.
	\item For any two distinct edges $e,f$, the simple polygonal chains induced by $\mathcal{E}^E(e)$ and $\mathcal{E}^E(f)$ are disjoint except possibly at the endpoints.
\end{enumerate}

To simplify notation, we will use only one mapping $\mathcal{E}=\mathcal{E}^V\cup\mathcal{E}^E$ to represent a rectilinear embedding $(\mathcal{E}^V,\mathcal{E}^E)$. We slightly abuse notation and identify $1$-tuples of $(\mathbb{Z}\times \mathbb{Z})^1$ with the elements of $\mathbb{Z}\times \mathbb{Z}$. In other words, $\mathcal{E}\colon V\cup E \rightarrow (\mathbb{Z}\times \mathbb{Z})^{\geq 1}$ and $\mathcal{E}|_V=\mathcal{E}^V$, $\mathcal{E}|_E=\mathcal{E}^E$. Since $\mathcal{E}^V$ is by definition injective, for a~grid point $p\in\mathcal{E}(V)\subseteq\mathbb{Z}\times\mathbb{Z}$, we let $\mathcal{E}^{-1}(p)$ denote the vertex $v\in V$ such that $\mathcal{E}(v)=p$. The \emph{area} of a rectilinear embedding~$\mathcal{E}$, denoted $\operatorname{Area}(\mathcal{E})$, is the minimal area of an axes-parallel closed rectangle $R$ such that the embedding is contained in $R$. We utilize the following theorem of Valiant which establishes a~sufficient condition for existence of rectilinear embedding for a planar graph $G$.

\begin{theorem}[\cite{Valiant1981}]\label{thm:valiant}
	Given a planar graph $G=(V,E)$ with maximum degree $\Delta G \leq 4$, there exists a rectilinear embedding $\mathcal{E}$ of $G$ satisfying $\operatorname{Area}(\mathcal{E})\leq\Oh{|V|^2}$. Moreover, $\mathcal{E}$ can be computed in polynomial time with respect to the size of $G$.	
\end{theorem}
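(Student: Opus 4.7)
The plan is to invoke classical results from planar graph drawing and adapt them to the rectilinear setting. First, I would compute a combinatorial planar embedding of $G$ in linear time (e.g., via the Hopcroft--Tarjan planarity algorithm), thereby fixing the cyclic order of incident edges around each vertex and choosing an outer face. This combinatorial embedding will be the skeleton guiding every subsequent geometric decision.

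Next, I would apply a grid-based drawing algorithm to obtain an initial layout of polynomial area. A convenient tool is the visibility representation of Rosenstiehl and Tarjan, in which vertices are represented by horizontal segments and edges by vertical segments, with all endpoints on an integer grid of size $\Oh{|V|} \times \Oh{|V|}$. Since $\Delta G \leq 4$, each vertex has at most four incident edges, so after constant-factor scaling (which preserves the $\Oh{|V|^2}$ area bound) the horizontal segment representing a vertex can be safely collapsed to a single grid point while still leaving enough free cells around it to attach its incident edges.

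The main step, which I expect to be the principal obstacle, is to convert this layout into a rectilinear embedding satisfying conditions (i)--(iv) of the definition. To do so, I would route each edge as a polygonal chain of unit horizontal and vertical segments on the scaled integer grid, defining $\mathcal{E}^E(e)$ accordingly. The bound $\Delta G \leq 4$ ensures that each vertex offers enough ``ports'' (at most one in each cardinal direction) to accommodate its incident edges while respecting the cyclic order prescribed by the planar embedding; the extra space created by the scaling provides disjoint corridors through which the individual staircases can be threaded without meeting except at shared endpoints, which gives condition (iv). Conditions (i)--(iii) then follow directly from the construction. Since every stage (planar embedding, visibility representation, scaling, edge routing) runs in polynomial time with respect to $|V|$, the whole procedure does as well, and the final area remains $\Oh{|V|^2}$, as required.
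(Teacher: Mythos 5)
This statement is not proved in the paper at all: it is imported verbatim from \cite{Valiant1981} as a black box, so there is no ``paper proof'' to match against. Your proposal is a sketch of the standard modern route to the same fact (planarity testing, a visibility representation on an $\Oh{|V|}\times\Oh{|V|}$ grid, then conversion to an orthogonal drawing), which is the Tamassia--Tollis style argument rather than Valiant's original one; it is a perfectly legitimate way to establish the theorem, and it has the advantage of making the polynomial-time computability explicit via well-known linear-time subroutines. The one place where your sketch is too quick is the claim that $\Delta G\leq 4$ automatically gives ``at most one port in each cardinal direction'': in a visibility representation all four edges of a degree-$4$ vertex may attach to the \emph{same} side of its horizontal segment, so after collapsing the segment to a point you must locally reroute at least one of those edges around the vertex to reach the unused south/east/west port, and argue that these detours can be threaded through the corridors created by the constant-factor scaling without introducing crossings. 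This local rerouting is exactly the technical content of the published grid-embedding algorithms and does go through, but it needs to be stated rather than absorbed into ``ports''; with that step filled in, your argument yields the $\Oh{|V|^2}$ area bound and polynomial running time claimed in \Cref{thm:valiant}, and it also delivers the unit-step polygonal chains and disjointness conditions of the paper's definition simply by listing the intermediate grid points of each orthogonal segment.
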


\subsection{Thresholds bounded by 3 in unit disk graphs}
We begin with an auxiliary result showing \NPhness of the \IS problem in \mbox{$3$-regular} and $4$-regular unit disk graphs.

\begin{theorem}\label{thm:is:npc:regular}
	\IS is \NPc even if the underlying graph is an~$r$-regular unit disk graph, where $r\in \{3,4\}$.
\end{theorem}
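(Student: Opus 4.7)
The plan is to prove \NPhness by reducing from \IS on planar graphs of bounded maximum degree: \IS on planar cubic graphs for the $3$-regular case, and \IS on planar graphs of maximum degree~$4$ for the $4$-regular case. Both source problems are well-known to be \NPh, while \IS is trivially in \NP, so the task is to describe the construction and verify its correctness.

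First, given an input planar graph~$G$, I would invoke \Cref{thm:valiant} to obtain a rectilinear embedding $\mathcal{E}$ of~$G$ into a grid of area $\Oh{|V(G)|^2}$, and then rescale $\mathcal{E}$ by a sufficiently large constant so that every vertex, edge-segment, and bend has empty grid space around it. At every grid point used by the rescaled embedding I would place a closed unit-diameter disk, producing a unit disk graph $G_0$ in which each edge of~$G$ is replaced by a path of subdivision disks and each vertex of~$G$ appears as a single disk. By calibrating the rescaling factor, I would enforce a fixed, uniform parity on the number of internal vertices in every subdivision chain. The point of this parity control is standard: for the right parity, a subdivided path contributes the same count to every maximum independent set regardless of whether its two endpoints are chosen, so the maximum independent set size of~$G_0$ equals that of~$G$ plus a constant depending only on~$\mathcal{E}$.

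Next, to enforce $r$-regularity, I would attach \emph{degree-filling gadgets} to every vertex of deficient degree. Subdivision vertices have deficiency $r-2$, while an original vertex $v$ of $G$ has deficiency $r-\deg_G(v)$. Each gadget is a small planar unit-disk graph equipped with a prescribed number of ports; upon attachment, these ports raise the deficient vertex to degree exactly~$r$, while every internal gadget vertex becomes $r$-regular. The gadgets must be designed so that their contribution to the maximum independent set size is a fixed constant, independent of whether the attachment vertex belongs to the selected independent set. For $r=3$ one can use, for example, a triangular prism with one edge subdivided, which exposes a single degree-$2$ vertex that we identify with the deficient vertex; for $r=4$ analogous constructions with one or two ports, combined as needed, fill in the missing degree. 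The empty grid area left by the earlier rescaling step accommodates these gadgets without creating spurious disk intersections.

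The main obstacle I expect is the coordinated design of the subdivision parity and the degree-filling gadgets: both must interact so that a single, globally computable constant $C$ satisfies $\max|I'| = \max|I|+C$, where $I$ and $I'$ range over the independent sets of $G$ and of the final regular unit disk graph $G'$, respectively, and simultaneously the gadgets must be realizable as planar unit disks without creating unintended adjacencies to the subdivision chains, to each other, or to the original vertices. Once this bookkeeping is in place, polynomial size and polynomial running time of the reduction follow directly from \Cref{thm:valiant}, and the correspondence between independent sets witnesses the desired \NPhness of \IS on $r$-regular unit disk graphs.
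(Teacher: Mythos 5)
Your high-level skeleton (reduce from \IS on bounded-degree planar graphs, use \Cref{thm:valiant} to lay the graph out on a grid, subdivide edges with controlled parity so that $\alpha$ shifts by a computable constant) matches the paper's proof. The genuine gap is in the step you yourself flag as the main obstacle: how to raise the degree-$2$ subdivision vertices to degree $r$. Your pendant ``degree-filling gadgets'' do not work as described. A pendant gadget $H$ attached via a port $w$ to a chain vertex $x$ contributes $\alpha(H)$ when $x$ is not selected but only $\max\{|I|: I \text{ an IS of } H,\ w\notin I\}$ when $x$ is selected, and your own example fails the required indifference: the triangular prism with one subdivided edge has $\alpha=3$ (e.g.\ the subdivision vertex together with two antipodal prism vertices) but every independent set avoiding the subdivision vertex has size at most $2$. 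So selecting a chain vertex incurs a penalty of $1$ from its gadget, which exactly cancels the gain of selecting it; a short computation shows that leaving an entire chain unselected then achieves the same total as the intended alternating selection, so both endpoints of an original edge can be placed in the independent set with no loss. The correspondence $\alpha(G')=\alpha(G)+C$ collapses and the reduction is unsound. On top of this, since \emph{every} subdivision vertex is deficient, a separate gadget must be geometrically realized at every disk of every chain; consecutive chain disks are at unit distance no matter how much you rescale the embedding (rescaling lengthens the chains, it does not spread out the disks within a chain), so the claim that ``empty grid space'' absorbs the gadgets without spurious intersections is not justified.

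The paper sidesteps both problems with a different idea: instead of hanging gadgets off the chain, it modifies the chain itself, replacing every third-ish subdivision vertex $x^e_{3i-1}$ by a clique $K_{r-1}$ of true twins (realized in the disk picture simply as $r-1$ co-located copies of the same disk). This makes every chain vertex, clique vertex, and original vertex exactly $r$-regular in one stroke, and because at most one vertex of each clique can enter an independent set and all clique members have identical outside neighbourhoods, the chain's independent-set combinatorics is exactly that of an even path, giving the clean budget $k'=k+\sum_e 3q_e$. The remaining work in the paper is then purely arithmetic (inserting $6$--$9$ disks of diameter $1/7$ per grid segment so that each chain length is a multiple of $6$), with no pendant structure to pack. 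If you want to keep your pendant-gadget route you would need a gadget $H$ whose port has degree $r-1$, whose other vertices have degree $r$, which admits a maximum independent set avoiding the port, and which packs at unit spacing along the chains --- none of which your current proposal establishes.
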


We divide the proof of \Cref{thm:is:npc:regular} into two parts. First we explain the construction of the reduction. In the second part we explain how to represent the resulting graph as a unit disk graph.

\subsection*{Construction}
The reduction is from the \IS problem on $r$-regular planar graphs. Let $(G,k)$ be an instance of the \IS problem where $G=(V,E)$ is a planar $r$-regular graph. We construct instance $(G',k')$ as follows. Start with $G$ and subdivide each edge $e=\{u,v\}\in E$ exactly $6q_e$ times, creating a path $ux_1^ex_2^e\dots x_{6q_e}^ev$. The number $q_e$, which depends on the edge~$e$, will be explained later. For the construction, it is only important that the number of subdivisions is a multiple of $6$. Next, for all $i\in [2q_e]$, replace each vertex $x_{3i-1}^e$ with a clique $K_{r-1}$ and connect all its neighbors to the clique. In other words, create $r-2$ additional copies of the vertex $x_{3i-1}^e$ and connect these copies into a complete graph (independently for each $i$). Let $X_e$ denote the set of vertices created by subdividing the edge $e$ (including the copies of all vertices $x_{3i-1}^e$) (see \Cref{fig:edge_subdivision}). Let $G'$ denote the resulting graph. Note that $G'$ is $r$-regular. To finish the construction, we set $k'=k+\sum_{e\in E}3q_e$.

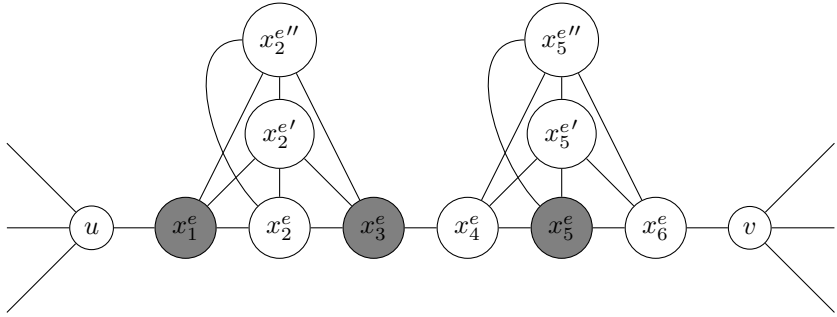
\begin{figure}[ht]
	\centering
	\begin{tikzpicture}[node distance=1.25cm]
	\node[draw,circle](u) {$u$};
	\node[draw,circle,fill=gray](x1)[right of=u] {$x_1^e$};
	\node[draw,circle](x2)[right of=x1] {$x_2^e$};
	\node[draw,circle](x22)[above of=x2] {${x_2^e}'$};
	\node[draw,circle](x222)[above of=x22] {${x_2^e}''$};
	\node[draw,circle,fill=gray](x3)[right of=x2] {$x_3^e$};
	\node[draw,circle](x4)[right of=x3] {$x_4^e$};
	\node[draw,circle,fill=gray](x5)[right of=x4] {$x_5^e$};
	\node[draw,circle](x55)[above of=x5] {${x_5^e}'$};
	\node[draw,circle](x555)[above of=x55] {${x_5^e}''$};
	\node[draw,circle](x6)[right of=x5] {$x_6^e$};
	\node[draw,circle](v)[right of=x6] {$v$};
	
	\draw (u) -- (x1);
	\draw (x1) -- (x2);
	\draw (x2) -- (x3);
	\draw (x3) -- (x4);
	\draw (x4) -- (x5);
	\draw (x5) -- (x6);
	\draw (x6) -- (v);
	\draw (x2) -- (x22);
	\draw (x22) -- (x222);
	\draw (x1) -- (x22);
	\draw (x1) -- (x222);
	\draw (x3) -- (x22);
	\draw (x3) -- (x222);
	\draw (x222) to [out=180,in=135] (x2);
	
	\draw (x5) -- (x55);
	\draw (x55) -- (x555);
	\draw (x4) -- (x55);
	\draw (x4) -- (x555);
	\draw (x6) -- (x55);
	\draw (x6) -- (x555);
	\draw (x555) to [out=180,in=135] (x5);
	
	\node[](vv)[right of=v]{};
	\node[](vva)[above of=vv]{};
	\node[](vvb)[below of=vv]{};
	\draw (v) --(vv);
	\draw (v) --(vva);	
	\draw (v) --(vvb);	
	
	\node[](uu)[left of=u]{};
	\node[](uua)[above of=uu]{};
	\node[](uub)[below of=uu]{};
	\draw (u) --(uu);
	\draw (u) --(uua);	
	\draw (u) --(uub);

	\end{tikzpicture}
	\caption{An example subdivision of the edge $e=\{u,v\}$ in the case of $4$-regular graphs. In this case, $q_{e}=1$ and  $X_e=\{x_1^e,x_2^e,x_3^e,x_4^e,x_5^e,x_6^e,{x_2^e}',{x_2^e}'',{x_5^e}',{x_5^e}''\}$. The half-edges going from $u$ and $v$ symbolize the rest of the graph. The filled vertices showcase the situation from Case 1 in proof of \Cref{lem:is:npc:onedir}, as $u\notin I_{\ell - 1}$, we add $x_1^e$, $x_3^e$ and $x_5^e$  to $I_\ell$ and the set $I_\ell$ remains independent. }
	\label{fig:edge_subdivision}
\end{figure}

We now establish the equivalence of the instances $(G,k)$ and $(G',k')$. This is the content of \Cref{lem:is:npc:onedir,lem:is:npc:twodir}.

\begin{claim}\label{lem:is:npc:onedir}
	If $(G,k)$ is a \emph{yes}-instance of \IS, then $(G',k')$ is a \emph{yes}-instance of \IS.
\end{claim}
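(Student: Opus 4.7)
The plan is constructive: I would take an independent set $I\subseteq V$ of size at least $k$ in $G$ and extend it to an independent set $I'\subseteq V(G')$ of size at least $k'=k+\sum_{e\in E}3q_e$ by adding exactly $3q_e$ carefully chosen vertices from each subdivision gadget $X_e$. The essential observation I lean on is that, since $I$ is independent in $G$, for every edge $e=\{u,v\}\in E$ at most one of its endpoints belongs to $I$. Guided by this, I define $P_e\subseteq X_e$ by two parity patterns on the subdivision path: if $v\notin I$, take the \emph{even} pattern $P_e=\{x_2^e,x_4^e,\ldots,x_{6q_e}^e\}$; otherwise $u\notin I$, and I take the \emph{odd} pattern $P_e=\{x_1^e,x_3^e,\ldots,x_{6q_e-1}^e\}$. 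Both patterns consist of exactly $3q_e$ original path vertices and use no clique copies. I finally set $I':=I\cup\bigcup_{e\in E}P_e$.

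The size bound is then immediate: the gadgets $X_e$ are pairwise vertex-disjoint and disjoint from $V$, so $|I'|=|I|+\sum_{e\in E}|P_e|\geq k+\sum_{e\in E}3q_e=k'$. For independence I would enumerate the edge types of $G'$ and check each. No edge of $G'$ joins two vertices of $V$, since every edge of $G$ has been subdivided, so $I$ remains internally independent. The only edges between $V$ and a gadget $X_e$ are $\{u,x_1^e\}$ and $\{v,x_{6q_e}^e\}$, and the case distinction above ensures $x_1^e\in P_e$ only when $u\notin I$, and $x_{6q_e}^e\in P_e$ only when $v\notin I$. Within one gadget, any two vertices of $P_e$ lie at distance two on the subdivision path, so no path edge connects them; each clique $K_{r-1}$ installed at $x_{3i-1}^e$ contributes at most one representative to $P_e$ (the canonical vertex $x_{3i-1}^e$ itself, and only when its index matches the chosen parity), and the extra clique-to-neighbour edges reach only $x_{3i-2}^e$ and $x_{3i}^e$, both of opposite parity and hence outside $P_e$. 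Finally, distinct gadgets share no vertex, so no cross-gadget edge lies inside $I'$.

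The only real obstacle, and the reason the multiplier $3q_e$ is tight, is the observation that each gadget admits an independent set of size exactly $3q_e$ provided that at most one of the two endpoints $u,v$ is blocked; were both $u$ and $v$ forced into $I$, the achievable count would drop to $3q_e-1$ and the construction would fail. The independence of $I$ in $G$ is precisely what rules out this bad case, so the parity decision is the single nontrivial step in the argument, with the rest being a routine enumeration of the edge types in~$G'$.
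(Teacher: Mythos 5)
Your proposal is correct and follows essentially the same route as the paper: for each edge, you use the fact that at most one endpoint lies in $I$ to pick the odd or even parity class of the $6q_e$ subdivision vertices, adding exactly $3q_e$ vertices per edge; the paper does the same, merely phrasing the construction as an edge-by-edge induction and leaving the independence check implicit where you spell it out. No gaps.
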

\begin{proof}
	Assume that $(G,k)$ is a \emph{yes}-instance and let $I$ be an independent set in $G$ of size at least $k$. We build an independent set $I'$ in $G'$ of size at least $k'$. Let $E(G)=\{e_1,e_2,\ldots,e_m\}$ be enumeration of all edges of $G$ in an arbitrary order.  We inductively build a chain $I_0\subseteq I_1\subseteq \cdots \subseteq I_m$ of independent sets in $G'$ and set $I'=I_m$. Towards this, we start with $I_0=I$ and add vertices created by subdivisions, i.e. vertices from $X_{e_1},X_{e_2},\ldots,X_{e_m}$. For all $\ell\in[m]$ it will hold that $I_{\ell}\cap V(G)=I$. Notice that for $\ell = 0$, since $I\subseteq V(G)$, the property $I_0\cap V(G)=I$ holds. For the inductive step, let $\ell \geq 1$ and let $e_\ell = \{u,v\}$. Since $I$ is independent and $I_{\ell - 1}\cap V(G)=I$ it cannot happen that $\{u,v\}\subseteq I_{\ell-1}$. In other words, at least one of $u,v$ is not in $I_{\ell -1}$. We now distinguish two not necessarily exclusive cases (if both of them apply, choose one arbitrarily):
	\begin{description}
		\item[Case 1] If $u\notin I_{\ell-1}$, we set $I_{\ell}=I_{\ell - 1}\cup \{x_{2i-1}^{e_\ell}\mid i \in [3q_e]\}$.
		\item[Case 2] If $v\notin I_{\ell-1}$, we set $I_{\ell}=I_{\ell - 1}\cup \{x_{2i}^{e_\ell}\mid i \in [3q_e]\}$.
	\end{description}
	It is straightforward to verify that for all $\ell\in[m]$ the property $I_\ell\cap V(G)=I$ holds and $I_\ell$ is independent. Since $|I_0|=|I|\geq k$ and $|I_{\ell}|=|I_{\ell-1}|+3q_{e_\ell}$ for all $\ell \in [m]$, it indeed holds that $|I'|=|I_m|=|I_0|+\sum_{e\in E(G)}3q_e\geq k + \sum_{e\in E(G)}3q_e = k'$. It follows that $(G',k')$ is a \emph{yes}-instance.
\end{proof}
\begin{claim}\label{lem:is:npc:twodir}
	If $(G',k')$ is a \emph{yes}-instance of \IS, then $(G,k)$ is a \emph{yes}-instance of \IS.
\end{claim}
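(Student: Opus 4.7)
The plan is to start with an independent set $I'$ of $G'$ of size at least $k'$ and to extract from it an independent set $I$ of $G$ of size at least $k$. The key technical claim I would establish is a per-gadget bound: for every edge $e = \{u,v\} \in E(G)$, $|I' \cap X_e| \leq 3q_e$, with the sharper bound $|I' \cap X_e| \leq 3q_e - 1$ whenever both $u$ and $v$ lie in $I'$.

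Granted this bound, let $B$ denote the number of edges of $G$ whose both endpoints lie in $I'$. Summing over all edges of $G$ gives
\[
|I'| \;=\; |I' \cap V(G)| \;+\; \sum_{e \in E}|I' \cap X_e| \;\leq\; |I' \cap V(G)| \;+\; \sum_{e \in E}3q_e \;-\; B,
\]
and combining with $|I'| \geq k' = k + \sum_{e\in E} 3q_e$ yields $|I' \cap V(G)| \geq k + B$. The set $I' \cap V(G)$ need not be independent in $G$, but by definition the offending edges form a subgraph of $G$ with exactly $B$ edges, which admits a vertex cover of size at most $B$. Removing any such cover from $I' \cap V(G)$ produces an independent set of $G$ of size at least $k + B - B = k$, as required.

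To justify the per-gadget bound, I would partition $X_e$ into $2q_e$ consecutive \emph{groups}, where group $j$ (for $j \in [2q_e]$) comprises the path positions $\{3j-2,\, 3j-1,\, 3j\}$ of the subdivided edge, with the middle position $3j-1$ expanded into the clique $K_{r-1}$. Writing $a_j := |I' \cap (\text{group } j)|$, three constraints follow directly from the construction: (i)~$a_j \leq 2$, and $a_j = 2$ forces both path endpoints $x_{3j-2}^e$ and $x_{3j}^e$ to lie in $I'$, because any clique vertex is adjacent to both of them; (ii)~$a_j$ and $a_{j+1}$ cannot both equal $2$, since $x_{3j}^e$ and $x_{3j+1}^e$ are adjacent in $G'$; (iii)~if $u \in I'$ then $x_1^e \notin I'$ and hence $a_1 \leq 1$, and symmetrically $a_{2q_e} \leq 1$ if $v \in I'$.

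The remaining task, and the main obstacle, is a short optimisation under these constraints. Without constraints of type (iii), the best alternating pattern $(2,1,2,1,\ldots)$ sums to exactly $3q_e$, which yields the first bound. Imposing the boundary caps $a_1, a_{2q_e} \leq 1$ simultaneously forces at least one ``$2$'' entry to drop to a ``$1$'', giving the sharper bound $3q_e - 1$; a two-state dynamic program in the group index $j$, or a brief case analysis, verifies both bounds and completes the argument.
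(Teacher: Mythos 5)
Your proof is correct, and it assembles the same per-gadget facts into a genuinely different global argument than the paper's. The paper processes the edges of $G$ one at a time: it iteratively deletes $I'\cap X_{e}$ from the independent set, and whenever both endpoints of the current edge are still present it additionally deletes one endpoint, paying for that extra deletion with the sharper bound $|I'\cap X_e|\le 3q_e-1$; independence of the final set then follows because every edge of $G$ has lost an endpoint by the time it is processed. You instead sum the two bounds over all edges at once, conclude $|I'\cap V(G)|\ge k+B$ where $B$ counts the edges of $G$ with both endpoints in $I'$, and repair the non-independence at the very end by deleting a vertex cover (of size at most $B$) of the offending $B$-edge subgraph. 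Both proofs spend the same budget --- one unit of slack per ``bad'' edge --- but your version avoids the induction and any bookkeeping about the order in which edges are processed, at the cost of the extra vertex-cover observation. You are also more explicit than the paper about why the gadget bounds hold: the paper dismisses both $|I'\cap X_e|\le 3q_e$ and the sharper $3q_e-1$ with an appeal to ``the pigeonhole principle'', whereas your group decomposition, the pairing argument giving $a_j+a_{j+1}\le 3$ for consecutive groups, and the boundary caps $a_1,a_{2q_e}\le 1$ make the optimisation concrete (pair up groups $2$ through $2q_e-1$ to get $3(q_e-1)$, plus $1+1$ from the two boundary groups). This part of your write-up could serve as a rigorous replacement for the paper's one-line justification.
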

\begin{proof}
	Assume that $(G',k')$ is a \emph{yes}-instance and let $I'$ be an independent set in $G'$ of size at least $k'$. We build an independent set $I$ in $G$ of size at least $k$. Let $E(G)=\{e_1,\ldots,e_m\}$ be an enumeration of all edges of~$G$ in an arbitrary order. We inductively build a chain $I_0\supseteq I_1 \supseteq \cdots \supseteq I_m$ of independent sets in~$G'$ with $I_m\subseteq V(G)$ and we set $I=I_m$. Start with $I_0=I'$. Now let $\ell \geq 1$ and assume that the set~$I_{\ell-1}$ is already built and is independent. We describe how to build $I_\ell$. Let $e_\ell = \{u,v\}$. There are two cases to consider:
	\begin{description}
		\item[Case 1] At least one of $u$ and $v$ is not in $I_{\ell-1}$. In this case, since $I_\ell$ is independent, we have $|I_{\ell-1}\cap X_{e_\ell}|\leq 3q_{e_\ell}$ by the pigeonhole principle. We set $I_{\ell} = I_{\ell-1}\setminus X_{e_\ell}$.
		\item[Case 2] Both $u$ and $v$ are in $I_{\ell-1}$. In this case, $|I_{\ell-1}\cap X_{e_\ell}|\leq 3q_{e_\ell}-1$ by the same argument as above. We set $I_{\ell}=I_{\ell-1}\setminus (X_{e_\ell}\cup \{u\})$.
	\end{description}
	The resulting set $I=I_m$ is indeed independent. To see this let $e=\{u,v\}\in E(G)$ be an arbitrary edge. Notice that at the time of processing the edge $e_\ell=e$ one of $u,v$ was already missing in $I_{\ell-1}$ {(Case~1)} or we explicitly removed $u$ from $I_{\ell-1}$ {(Case~2)}. It remains to show that $|I|\geq k$.
	Note that $|I_{\ell-1}|-|I_{\ell}|\leq 3q_{e_\ell}$ for all~$\ell\in[m]$. It follows that $|I|=|I_m|\geq |I_0|- \sum_{e\in E(G)}3q_e=k'-\sum_{e\in E(G)}3q_e=k$. Thus, $(G,k)$ is a \emph{yes}-instance.
\end{proof}

We now turn our attention to how to represent the graph $G'$ as a unit disk graph. We also explain how to compute the constants $q_e$ for each edge $e$, and we show that they can be bounded by a polynomial in the size of $G$.

\subsection*{Unit disk representation}

We let $d=\frac{1}{7}$ be the diameter of the disks in the representation. Since~$G$ is planar and in both cases (i.e.,~$G$~is~$3$-~or~$4$-regular) we have $\Delta G\leq 4$, by \Cref{thm:valiant} there is a rectilinear embedding~$\mathcal{E}$~of~$G$ of polynomial area and computable in polynomial time. We now describe how to represent the vertices of~$G'$ with disks.

First, the vertices $v\in V(G')$ corresponding to vertices of $G$ will have their disk centered at the grid point $\mathcal{E}(v)$.

We now show how to construct the subdivisions of the edges. We proceed independently for each edge~$e\in E(G)$\footnote{All variables introduced from this point should have additional superscript~$e$ to signify their dependence on the edge~$e$. Nonetheless, for the sake of readability, we omit it.}. Let $\mathcal{E}(e)=(p_1,\ldots,p_g)$. We place disks $D_2,\ldots,D_{g-1}$ centered at the points $p_2,\ldots,p_{g-1}$. Let $D_1$ and $D_g$ denote the disks corresponding to vertices $\mathcal{E}^{-1}(p_1)$ and $\mathcal{E}^{-1}(p_g)$, respectively. Our task is now to insert a certain number of disks between $D_{i},D_{i+1}$ for all $i\in [g-1]$ such that the total number of disks between $D_1$ and $D_g$ is a multiple of~$6$, that is, the total number of disks between $D_1$ and $D_g$ (excluding $D_1$ and $D_g$) should be equal to $6q_e$. Let $w_i$ denote the number of disks inserted between $D_i,D_{i+1}$. We specify the numbers $w_i$ later. The total number of disks between $D_1,\ldots, D_g$ is therefore given by $y_e=g-2+\sum_{i=1}^{g-1}w_i$. Our aim is now to choose the numbers $w_i$ such that $y_e=6q_e$.

To achieve this, we do the following. First, we learn how to insert $\ell\in \{6,7,8,9\}$ disks between a single pair of adjacent disks $D_i$ and $D_{i+1}$. We prove this in the following lemma. Note that $D_i$ and $D_{i+1}$ are centered at neighboring grid points since, by definition, $p_i,p_{i+1}$ are neighboring grid points. We simplify the scenario and assume that $D_i$ and $D_{i+1}$ are centered at $p_i=(0,0)$ and $p_{i+1}=(1,0)$, respectively. It is not hard to generalize this idea to general points $p_{i},p_{i+1}$.

\begin{lemma}\label{lem:is:npc:construction_technical_lemma1}
	Let $L$ be the line segment with endpoints $(0,0),(1,0)$ and let $\ell\in\{6,7,8,9\}$. There exist $\ell$ disks $E_1,\ldots, E_\ell$ with diameter $d=\frac{1}{7}$ and centers $s_1,\ldots, s_\ell$ all lying on $L$ such that:
	\begin{enumerate}[label=\roman*)]
		\item $s_1=(d,0)$,\label{prop1}
		\item $s_\ell=(1-d,0)$,\label{prop2}
		\item any disk $E_j$ intersects exactly its neighbors $E_{j-1}$ and $E_{j+1}$ (if they exist).\label{prop3}
	\end{enumerate}
\end{lemma}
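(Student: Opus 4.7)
The plan is to prove the lemma by a direct construction: place the $\ell$ centers equally spaced along $L$. Concretely, I would set the common gap $g := \tfrac{5}{7(\ell-1)}$ and define $s_j := \bigl(\tfrac{1}{7} + (j-1)g,\, 0\bigr)$ for $j \in [\ell]$. Properties \ref{prop1} and \ref{prop2} are then immediate from the definition, since $s_1 = (1/7, 0) = (d, 0)$ and $s_\ell = (1/7 + 5/7, 0) = (6/7, 0) = (1-d, 0)$. The heart of the proof is property \ref{prop3}.

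For \ref{prop3}, I would use the standard fact that two closed disks of diameter $d$ intersect iff their centers lie at distance at most $d$. Since all the $s_j$ are collinear on $L$, we have $\|s_j - s_k\| = |j-k| \cdot g$. Hence it suffices to verify the two inequalities
\begin{equation*}
  g \le d = \tfrac{1}{7} \quad\text{and}\quad 2g > d = \tfrac{1}{7}.
\end{equation*}
The first ensures every pair of neighbours $E_j, E_{j+1}$ intersects, while the second, combined with collinearity, ensures $\|s_j - s_k\| \ge 2g > d$ whenever $|j-k|\ge 2$, so non-neighbouring disks are disjoint.

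Both inequalities can be checked case-by-case for $\ell \in \{6,7,8,9\}$: the gap takes the values $g = \tfrac{1}{7}, \tfrac{5}{42}, \tfrac{5}{49}, \tfrac{5}{56}$, respectively, and in each case $g \le \tfrac{1}{7}$ and $2g > \tfrac{1}{7}$ hold (with the first being an equality only in the boundary case $\ell = 6$, which still satisfies the condition $g \leq d$ and thus yields a valid intersection at a single boundary point, consistent with the definition $D_i \cap D_j \neq \emptyset$ of the unit disk model).

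The main ``obstacle'' is really just the tight case $\ell = 6$, where $g = d$ exactly; I would explicitly note that touching disks count as intersecting under the paper's convention so this case is fine. Otherwise, the argument is entirely a one-line construction plus elementary arithmetic; no case distinction on the positions of $D_i, D_{i+1}$ in the plane is needed, since by translation and reflection we may always reduce to the segment from $(0,0)$ to $(1,0)$, and the analogous construction for general adjacent grid points $p_i, p_{i+1}$ then follows by the same rigid motion.
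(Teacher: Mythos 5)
Your construction is identical to the paper's: expanding $s_j = \tfrac{1}{7} + (j-1)\tfrac{5}{7(\ell-1)}$ gives exactly the paper's formula $a_j = \tfrac{5j+\ell-6}{7(\ell-1)}$, and the verification via $g \le d$ and $2g > d$ is the same check the paper performs. The argument is correct, and your explicit note that the tight case $\ell=6$ yields touching disks (which count as intersecting for closed disks) is a welcome extra detail.
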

\begin{proof}
	We prove this by construction and specify the centers of the $\ell$ disks. As all centers shall lie on the line $L$ they are of the form $s_j=(a_j,0)$. For fixed $\ell$ and $j\in [\ell]$, the numbers $a_j$ are given by the formula:
	$$
	a_j=\frac{5j+\ell-6}{7(\ell-1)}.
	$$
	It can be verified by a straightforward calculation that the properties \ref{prop1}, \ref{prop2} and \ref{prop3} hold. To verify \ref{prop3}, it is enough to check that $a_{j+1}-a_j\leq d$ and $a_{j+2}-a_j>d$ for appropriate $j$.
\end{proof}

Now we know how to insert $\ell\in \{6,7,8,9\}$ disks. We show how many disks we have to insert such that $y_e=g-2+\sum_{i=1}^{g-1}w_i$ is a multiple of $6$, given $g\geq 2$. In other words, we are now in the situation to choose the $w_i$'s, given $g\geq 2$. We prove this in the following lemma.
\begin{lemma}\label{lem:is:npc:construction_technical_lemma2}
	For any $g\geq 2$ there exist $g-1$ numbers $w_1,\ldots,w_{g-1}\in \{6,7,8,9\}$ such that
	$$
	g-2+\sum_{i=1}^{g-1}w_i=0\mod 6.
	$$ 
\end{lemma}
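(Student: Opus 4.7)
The plan is to reduce everything modulo $6$ and then exhibit the $w_i$'s essentially by inspection. Since $\{6,7,8,9\}\equiv\{0,1,2,3\}\pmod 6$, the problem becomes: for every $g\ge 2$, choose $g-1$ residues $r_1,\dots,r_{g-1}\in\{0,1,2,3\}$ so that $r_1+\cdots+r_{g-1}\equiv 2-g\pmod 6$. Letting $r:=(2-g)\bmod 6$ be the desired residue, it suffices to find such a representation of $r$ as a sum of $g-1$ terms from $\{0,1,2,3\}$, since we can then lift each residue $r_i$ back to the unique element of $\{6,7,8,9\}$ congruent to it modulo $6$.

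First I would dispatch the base case $g=2$. Here there is a single term and we need $w_1\equiv 0\pmod 6$ with $w_1\in\{6,7,8,9\}$; the choice $w_1=6$ works, and one verifies $g-2+w_1=6\equiv 0\pmod 6$.

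For the main case $g\ge 3$, the key observation is the following small lemma: every residue class modulo $6$ can be written as a sum of \emph{two} elements of $\{0,1,2,3\}$. Indeed, the sums $0{+}0,\ 0{+}1,\ 0{+}2,\ 0{+}3,\ 1{+}3,\ 2{+}3,\ 3{+}3$ cover the residues $0,1,2,3,4,5,0$ respectively, hitting every class of $\mathbb{Z}/6\mathbb{Z}$. Given this, I would set $w_3=w_4=\cdots=w_{g-1}=6$ (which contribute $0$ to the sum modulo $6$) and then choose $w_1,w_2\in\{6,7,8,9\}$ whose residues realise $r$ according to the table above. A direct computation then gives $g-2+\sum_{i=1}^{g-1}w_i\equiv (g-2)+r+0\equiv (g-2)+(2-g)\equiv 0\pmod 6$, as required.

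There is essentially no obstacle; the only thing to be a bit careful about is the edge case $g=2$, because in that regime we have only a single term $w_1$ at our disposal and cannot invoke the two-term covering lemma. Handling it separately (with $w_1=6$) resolves this cleanly, and the argument for $g\ge 3$ is a one-line application of the covering observation.
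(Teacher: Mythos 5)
Your proof is correct and takes essentially the same approach as the paper: both constructions set all but at most two of the $w_i$ equal to $6$ and adjust the remaining one or two to hit the required residue class modulo $6$. The paper spells this out as an explicit six-case analysis on $g \bmod 6$ (with the same care about $g\geq 3$ when two terms are adjusted), which is exactly an instance of your two-term covering observation.
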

\begin{proof}
	We divide the proof into six cases according to the residue class of $g$ modulo $6$.
	\begin{description}
		\item[Case 1] If $g=0\mod 6$, set $w_1=8$ and $w_i=6$ for $i\in\{2,\dots,g-1\}$.
		\item[Case 2] If $g=1\mod 6$, set $w_1=7$ and $w_i=6$ for $i\in\{2,\dots,g-1\}$.
		\item[Case 3] If $g=2\mod 6$, set $w_i=6$ for all $i\in[g-1]$.
		\item[Case 4] If $g=3\mod 6$, set $w_1=9,w_2=8$ and $w_i=6$ for all $i\in\{3,\dots,g-1\}$.
		\item[Case 5] If $g=4\mod 6$, set $w_1=w_2=8$ and $w_i=6$ for all $i\in\{3,\dots,g-1\}$.
		\item[Case 6] If $g=5\mod 6$, set $w_1=9$ and $w_i=6$ for all $i\in\{2,\dots,g-1\}$.
	\end{description}
	It is a straightforward computation to verify that the chosen numbers $w_i$ work in every case. Note that $g\geq 3$ in Case 4 and 5.
\end{proof}

The formula for $q_e$ is thus given by 
$$
q_e=\frac{1}{6}y_e=\frac{1}{6}\left(g-2+\sum_{i=1}^{g-1}w_i\right).
$$
The number $g$ is given by the polygonal chain induced by $\mathcal{E}(e)$ and \Cref{lem:is:npc:construction_technical_lemma2} tells us how to choose the numbers $w_i$.

It remains to show how to represent the cliques that are replaced for the vertices $x_{3i-1}^e$ for $i\in[2q_e]$. Simply create $r-2$ additional copies of the corresponding disk in the representation.

\begin{figure}[ht]
	\centering
	\begin{tikzpicture}[scale=0.5]
		
		\begin{scope}
			\draw[fill=red,line width = 0.03cm] (-3.000000, 6.000000) circle (0.214286);
			\draw[fill=red,line width = 0.03cm] (0.000000, 0.000000) circle (0.214286);
			\draw[fill=red,line width = 0.03cm] (0.000000, 3.000000) circle (0.214286);	
			\draw[fill=red,line width = 0.03cm] (3.000000, 3.000000) circle (0.214286);

			

			\draw[line width=0.25mm] (-3-0.214286,6) -- (-4.5,6);
			\draw[line width=0.25mm] (-3,6+0.214286) -- (-3,7.5);
			\draw[line width=0.25mm] (3+0.214286,3) -- (4.5,3);
			\draw[line width=0.25mm] (0,3+0.214286) -- (0,7.5);
			
			\draw[line width=0.25mm] (0,0.214286) -- (0,3-0.214286);
			\draw[line width=0.25mm] (0.214286,0) -- (3,0);
			\draw[line width=0.25mm] (3,3-0.214286) -- (3,0);
			\draw[line width=0.25mm] (0-0.214286,0) -- (-3,0);
			\draw[line width=0.25mm] (-3,0) -- (-3,6-0.214286);
			\draw[line width=0.25mm] (0+0.214286,3) -- (3-0.214286,3);
			
			\node[] (a1) at (-2.4,6) {$a_1$};
			\node[] (a2) at (-0.6,3) {$a_2$};
			\node[] (a3) at (3,3.4) {$a_3$};
			\node[] (a4) at (0,-0.5) {$a_4$};
			
		\end{scope}
		
		\draw[->] (7,3) -- (8.5,3);

		\begin{scope}[xshift=15cm]
			\draw[fill=red,line width = 0.03cm] (-3.000000, 6.000000) circle (0.214286);
			\draw[fill=red,line width = 0.03cm] (0.000000, 0.000000) circle (0.214286);
			\draw[fill=red,line width = 0.03cm] (0.000000, 3.000000) circle (0.214286);	
			\draw[fill=red,line width = 0.03cm] (3.000000, 3.000000) circle (0.214286);

%

			\draw[line width=0.25mm] (-3-0.214286,6) -- (-4.5,6);
			\draw[line width=0.25mm] (-3,6+0.214286) -- (-3,7.5);
			\draw[line width=0.25mm] (3+0.214286,3) -- (4.5,3);
			\draw[line width=0.25mm] (0,3+0.214286) -- (0,7.5);
			
			\draw[line width = 0.03cm] (-3.000000, 3.428571) circle (0.214286);
			\draw[line width = 0.03cm] (-3.000000, 3.857143) circle (0.214286);
			\draw[line width = 0.03cm] (-3.000000, 4.285714) circle (0.214286);
			\draw[line width = 0.03cm] (-3.000000, 4.714286) circle (0.214286);
			\draw[line width = 0.03cm] (-3.000000, 5.142857) circle (0.214286);
			\draw[line width = 0.03cm] (-3.000000, 5.571429) circle (0.214286);
			
			\draw[line width = 0.03cm] (-3.000000, 0.428571) circle (0.214286);
			\draw[line width = 0.03cm] (-3.000000, 0.734694) circle (0.214286);
			\draw[line width = 0.03cm] (-3.000000, 1.040816) circle (0.214286);
			\draw[line width = 0.03cm] (-3.000000, 1.346939) circle (0.214286);
			\draw[line width = 0.03cm] (-3.000000, 1.653061) circle (0.214286);
			\draw[line width = 0.03cm] (-3.000000, 1.959184) circle (0.214286);
			\draw[line width = 0.03cm] (-3.000000, 2.265306) circle (0.214286);
			\draw[line width = 0.03cm] (-3.000000, 2.571429) circle (0.214286);
			
			\draw[line width = 0.03cm] (-2.571429, 0.000000) circle (0.214286);
			\draw[line width = 0.03cm] (-2.265306, 0.000000) circle (0.214286);
			\draw[line width = 0.03cm] (-1.959184, 0.000000) circle (0.214286);
			\draw[line width = 0.03cm] (-1.653061, 0.000000) circle (0.214286);
			\draw[line width = 0.03cm] (-1.346939, 0.000000) circle (0.214286);
			\draw[line width = 0.03cm] (-1.040816, 0.000000) circle (0.214286);
			\draw[line width = 0.03cm] (-0.734694, 0.000000) circle (0.214286);
			\draw[line width = 0.03cm] (-0.428571, 0.000000) circle (0.214286);
			
			\draw[line width = 0.03cm] (0.000000, 0.428571) circle (0.214286);
			\draw[line width = 0.03cm] (0.000000, 0.857143) circle (0.214286);
			\draw[line width = 0.03cm] (0.000000, 1.285714) circle (0.214286);
			\draw[line width = 0.03cm] (0.000000, 1.714286) circle (0.214286);
			\draw[line width = 0.03cm] (0.000000, 2.142857) circle (0.214286);
			\draw[line width = 0.03cm] (0.000000, 2.571429) circle (0.214286);
			
			\draw[line width = 0.03cm] (0.428571, 0.000000) circle (0.214286);
			\draw[line width = 0.03cm] (0.696429, 0.000000) circle (0.214286);
			\draw[line width = 0.03cm] (0.964286, 0.000000) circle (0.214286);
			\draw[line width = 0.03cm] (1.232143, 0.000000) circle (0.214286);
			\draw[line width = 0.03cm] (1.500000, 0.000000) circle (0.214286);
			\draw[line width = 0.03cm] (1.767857, 0.000000) circle (0.214286);
			\draw[line width = 0.03cm] (2.035714, 0.000000) circle (0.214286);
			\draw[line width = 0.03cm] (2.303571, 0.000000) circle (0.214286);
			\draw[line width = 0.03cm] (2.571429, 0.000000) circle (0.214286);
			
			\draw[line width = 0.03cm] (3.000000, 0.428571) circle (0.214286);
			\draw[line width = 0.03cm] (3.000000, 0.734694) circle (0.214286);
			\draw[line width = 0.03cm] (3.000000, 1.040816) circle (0.214286);
			\draw[line width = 0.03cm] (3.000000, 1.346939) circle (0.214286);
			\draw[line width = 0.03cm] (3.000000, 1.653061) circle (0.214286);
			\draw[line width = 0.03cm] (3.000000, 1.959184) circle (0.214286);
			\draw[line width = 0.03cm] (3.000000, 2.265306) circle (0.214286);
			\draw[line width = 0.03cm] (3.000000, 2.571429) circle (0.214286);
			
			\draw[line width = 0.03cm] (0.428571,3) circle (0.214286);
			\draw[line width = 0.03cm] (0.857143,3) circle (0.214286);
			\draw[line width = 0.03cm] (1.285714,3) circle (0.214286);
			\draw[line width = 0.03cm] (1.714286,3) circle (0.214286);
			\draw[line width = 0.03cm] (2.142857,3) circle (0.214286);
			\draw[line width = 0.03cm] (2.571429,3) circle (0.214286);
			
			\draw[fill=blue,line width = 0.03cm] (-3.000000, 0.000000) circle (0.214286);				
			\draw[fill=blue,line width = 0.03cm] (-3.000000, 3.000000) circle (0.214286);
			\draw[fill=blue,line width = 0.03cm] (3.000000, 0.000000) circle (0.214286);

			\node[] (a1) at (-2.4,6) {$a_1$};
			\node[] (a2) at (-0.6,3) {$a_2$};
			\node[] (a3) at (3,3.4) {$a_3$};
			\node[] (a4) at (0,-0.5) {$a_4$};
			
			\node[] (b1) at (-3.5,3) {$b_1$};
			\node[] (b2) at (-3,-0.5) {$b_2$};
			\node[] (b3) at (3,-0.5) {$b_3$};
			
			\node[] (w1) at (-1.5,0.5) {$8$};
			\node[] (w2) at (-2.5,1.5) {$8$};	
			\node[] (w3) at (-2.5,4.5) {$6$};
			\node[] (w4) at (-0.5,1.5) {$6$};
			\node[] (w5) at (1.5,0.5) {$9$};
			\node[] (w6) at (2.5,1.5) {$8$};
			\node[] (w7) at (1.5,3.5) {$6$};

		\end{scope}

	\end{tikzpicture}
	\caption{Example of a construction of the unit disk representation of the graph $G'$ from the proof of \Cref{thm:is:npc:regular} for $r=3$. On the left, the original $3$-regular graph $G$ is embedded into a grid (the half-edges represent the rest of the graph). On the right, the subdivision was made. The red disks correspond to the original vertices of the graph and blue disks correspond to the internal grid points contained in the polygonal chains representing the edges. These are the disks $D_2,\ldots,D_{g-1}$ for the corresponding edges. Consider the edge $e=\{a_1,a_4\}$. The red disks at $a_1$ and $a_4$ are the disks $D_1$ and $D_4$, respectively, and the blue disks at $b_1$ and $b_2$ are the disks $D_2$ and $D_3$, respectively. The empty disks correspond to the disks $E_j$ from \Cref{lem:is:npc:construction_technical_lemma1}.
	The numbers next to the edges correspond to the values~$w_i$ from \Cref{lem:is:npc:construction_technical_lemma2} (and are equal to the number of empty disks $E_j$ between (filled) red and blue disks). For the edge~$e$, we have $g=4$ grid points contained in the polygonal chain $\mathcal{E}(e)$, thus we are in the case $g=4 \mod 6$ from \Cref{lem:is:npc:construction_technical_lemma2} thus $w_1=w_2=8$ and $w_3=6$ for this particular edge. The total number of disks on the subdivided edge $e$ is thus $2+8+8+6=24=0\mod 6$ and thus $q_e=4$.
	}\label{fig:udg_construction}
\end{figure}
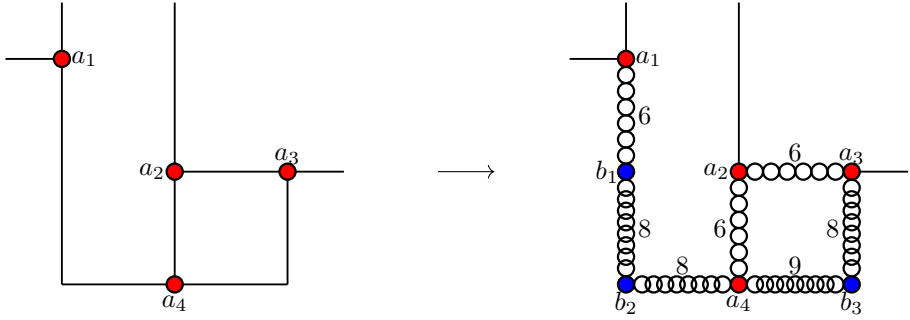

This completes the description of the unit disk representation for $G'$. An example of the construction is given in \Cref{fig:udg_construction}. Note that \Cref{lem:is:npc:construction_technical_lemma1} ensures that the disks are placed in between the disks $D_i,D_{i+1}$ starting from $s_1=(d,0)$ and ending at $s_\ell=(1-d,0)$. This implies that for any edge $e=\{u,v\}$ the disks adjacent to $D_1$ and $D_g$ will not intersect any other disks representing other subdivided edges (in particular those with endpoints $u$ or $v$). 

We are now ready to finally prove \Cref{thm:is:npc:regular}.

\begin{proof}[of \Cref{thm:is:npc:regular}]
	Reduce from \IS on $r$-regular planar graphs. This setting is \NPh~\citep{FleischnerSS2010}. Let $(G,k)$ be an instance of \IS where $G$ is $r$-regular planar graph. Construct an instance $(G',k')$ of \IS where $G'$ is $r$-regular unit disk graph as described above. Combining \Cref{lem:is:npc:onedir,lem:is:npc:twodir} the instances $(G,k)$ and $(G',k')$ are equivalent. It remains to argue that the reduction is polynomial. Computation of the rectilinear embedding $\mathcal{E}$ for $G$ can be done in polynomial time by \Cref{thm:valiant}. Computation of the numbers $w_i$, $g$ and $q_e$ can also be done in polynomial time. What is left to show is that the numbers $q_e$ are also polynomially bounded by the size of~$G$. By \Cref{thm:valiant} the area of $\mathcal{E}$ satisfies $\operatorname{Area}(\mathcal{E})\leq \Oh{|V(G)|^2}$. For any edge~$e$, the number of grid points contained in the polygonal chain induced by $\mathcal{E}(e)$ is at most $\operatorname{Area}(\mathcal{E})$. It follows that $g\leq \Oh{|V(G)|^2}$ for any edge $e$. By construction we have $w_i\leq 9$ for any $g\geq 2$ and $i\in[g-1]$. Thus, we have
	$$
	q_e\leq \frac{1}{6}\left(g-2+9(g-1)\right)\leq\frac{1}{6}(10g-11)\leq \frac{10}{6}g\leq\Oh{|V(G)|^2}.
	$$
	The number of newly added vertices is at most $\sum_{e\in E}10q_e$ and since $q_e$ is polynomial in $|V(G)|$, the reduction is indeed polynomial.
\end{proof}

Using \Cref{thm:is:npc:regular}, we can easily show that \TSS remains \NPc even if the threshold function is bounded by a constant. We show a construction for the case where all thresholds are exactly $3$. It is not hard to see that our result holds for every constant $c\geq 3$. Whenever thresholds are bounded by a constant $c'$, then they are certainly bounded by any constant $c\geq c'$.

\begin{corollary}
	\TSS is \NPc even if the underlying graph is a unit disk graph and all thresholds are bounded by a constant $c\geq 3$.
\end{corollary}
\begin{proof}
	By \Cref{thm:is:npc:regular}, \IS is \NPc when restricted to the class of $3$-regular unit disk graphs. The same \NPhness holds for \VC. We reduce from \VCshort restricted to such instances. Let $(G,k)$ be an instance of \VC and $G$ be a $3$-regular graph. Set $G'=G,k'=k$ and $\thr(v)=3$ for each $v\in V(G)$. Since $G'$ is $3$-regular this is the case of unanimous thresholds. As noted in \Cref{lem:tss:unanimous:vc:eq}, instances of \TSS with unanimous thresholds are equivalent to the \VC problem, so the theorem follows.
\end{proof}

The above result can be generalized for infinitely many constants $r$. Given an instance $(G,k)$ of \IS where the underlying graph is $r$-regular, replacing each vertex by a clique~$K_q$ makes the graph $(q(r+1)-1)$-regular. If we denote the new graph by $G_q$ it is not hard to see that thes instances $(G_q,k)$ and $(G,k)$ of \IS are equivalent. Replacing vertices by cliques can be easily achieved in intersection graph classes, in particular, unit disk graphs.

\begin{corollary}\label{cor:inf}
	If \IS is \NPh on the class of $r$-regular graphs, then \IS is \NPh in the class of $(q(r+1)-1)$-regular graphs for any positive integer $q$.
\end{corollary}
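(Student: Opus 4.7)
The plan is to prove the corollary by formalizing the clique blowup construction sketched in \Cref{remark:is}. Given a positive integer $q$ and an instance $(G,k)$ of \IS on an $r$-regular graph $G$, I would build $G_q$ as follows: replace each vertex $v\in V(G)$ with a clique $K_q^v$ on fresh vertices $v^1,\ldots,v^q$, and for every edge $\{u,v\}\in E(G)$ add all $q^2$ edges between $K_q^u$ and $K_q^v$. The output of the reduction is the pair $(G_q,k)$.

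First I would verify the regularity. Any vertex $v^i$ has exactly $q-1$ neighbours inside its own clique $K_q^v$ and, for each of the $r$ neighbours $u$ of $v$ in $G$, exactly $q$ neighbours in $K_q^u$. Summing gives $\deg_{G_q}(v^i)=(q-1)+rq=q(r+1)-1$, so $G_q$ is $(q(r+1)-1)$-regular as required. The construction clearly runs in polynomial time since $|V(G_q)|=q|V(G)|$ and $|E(G_q)|\le q^2|E(G)|$.

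Next I would prove the equivalence $(G,k)\in\IS \iff (G_q,k)\in\IS$. For the forward direction, given an independent set $I\subseteq V(G)$ with $|I|\ge k$, the set $I'=\{v^1\mid v\in I\}$ has size $|I|$, contains at most one vertex of each clique, and picks vertices only from cliques corresponding to pairwise non-adjacent vertices of $G$; hence $I'$ is independent in $G_q$. For the reverse direction, given an independent set $I'\subseteq V(G_q)$ with $|I'|\ge k$, observe that $I'$ contains at most one vertex from each clique $K_q^v$ (otherwise two vertices inside a clique would be adjacent), so $I:=\{v\in V(G) : I'\cap K_q^v\neq\emptyset\}$ satisfies $|I|=|I'|\ge k$; and if $u,v\in I$ were adjacent in $G$ then the chosen representatives in $K_q^u$ and $K_q^v$ would be adjacent in $G_q$, contradicting independence of $I'$.

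There is no real obstacle here: the construction is the standard lexicographic product with $\overline{K_q}$, and the only point that deserves care is confirming the degree formula and the bijective correspondence between independent sets via one-representative-per-clique. Combining the equivalence with the assumed \NPhness of \IS on $r$-regular graphs yields the claim for $(q(r+1)-1)$-regular graphs, completing the proof of \Cref{cor:inf}.
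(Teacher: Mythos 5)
Your proof is correct and formalizes exactly the construction the paper only sketches in \Cref{remark:is}: blow up each vertex into a clique $K_q$, fully join cliques of adjacent vertices, check the degree $(q-1)+qr=q(r+1)-1$, and observe the one-representative-per-clique correspondence of independent sets. The only nitpick is the parenthetical identification of the construction as the lexicographic product with $\overline{K_q}$ — it is the lexicographic product $G[K_q]$ (with the complete graph, not its complement) — but this does not affect the argument.
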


Combining \Cref{cor:inf} with \Cref{thm:is:npc:regular} we obtain the following.

\begin{corollary}
	\IS is \NPh even if the underlying graph is an $r$-regular unit disk graph where $r$ is positive integer and $r=-1\mod 4$ or $r=-1\mod 5$.
\end{corollary}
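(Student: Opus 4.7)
The plan is to apply Corollary~\ref{cor:inf} on top of the two base cases $r_0\in\{3,4\}$ established in Theorem~\ref{thm:is:npc:regular}. For any positive integer $r$ with $r\equiv -1\pmod{4}$, I would write $r=4q-1=q(3+1)-1$ for some integer $q\geq 1$, so that Corollary~\ref{cor:inf} applied to the \NPhness of \IS on $3$-regular unit disk graphs yields \NPhness on $r$-regular unit disk graphs. Symmetrically, if $r\equiv -1\pmod{5}$, then $r=5q-1=q(4+1)-1$ and Corollary~\ref{cor:inf} applied to the $4$-regular case gives the desired conclusion. Thus, once both residue classes are realized in this way, the statement follows immediately.

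What is worth spelling out is that the clique-substitution construction of Remark~\ref{remark:is} really stays within the class of unit disk graphs, since Corollary~\ref{cor:inf} was phrased for general graphs. Given a unit disk representation of the base graph, I would replace the disk $D_v$ of each vertex $v$ by $q$ disks of the same diameter with centres placed inside a tiny ball around the centre of $D_v$. Choosing the radius of this ball strictly smaller than the minimum positive distance between non-intersecting disk pairs in the original representation ensures that the $q$ new disks (i) pairwise intersect, inducing the required $K_q$, and (ii) intersect exactly the neighbours of $v$, introducing no spurious adjacency. Because the representation is finite, such a radius always exists; the resulting graph is a $(q(r_0+1)-1)$-regular unit disk graph with independence number equal to that of the base graph, so the reduction from \IS on the $r_0$-regular unit disk graph instance carries over.

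I do not anticipate any serious obstacle, as the substantive work is already contained in Theorem~\ref{thm:is:npc:regular}. The present corollary is essentially a bookkeeping observation that the residue classes $-1\pmod{4}$ and $-1\pmod{5}$ are exactly the ones reachable from $r_0=3$ and $r_0=4$ via the clique blow-up, together with the elementary verification that this blow-up can be performed while preserving unit-disk representability.
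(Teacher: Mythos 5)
Your proposal is correct and follows exactly the paper's route: the corollary is obtained by combining Theorem~\ref{thm:is:npc:regular} (the base cases $r_0\in\{3,4\}$) with the clique blow-up of Remark~\ref{remark:is}/Corollary~\ref{cor:inf}, noting that $r\equiv-1\pmod{r_0+1}$ is precisely the reachable residue class. Your geometric verification matches the paper's (which simply places the $q$ copies as identical duplicates of the original disk, avoiding even the small perturbation argument you describe).
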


\begin{remark}\label{remark:is}
	We remark that this approach does not prove \NPhness of \IS for \emph{all} constants $r\geq 3$ (note that for $r\leq 2$ the problem is in \P). The first value of $r$ unknown to us is $r=5$. Note that \IS is \NPh on planar $5$-regular graphs~\citep{Amiri2021}, however \Cref{thm:valiant} is not applicable since $\Delta G = 5$ in this case.
	
	By using \Cref{cor:inf} and explicit proof for $r=3,4$ we obtained \NPhness for infinitely many constants $r\geq 3$. Unfortunately, this approach can never be used to show \NPhness for \emph{all} constants~$r\geq 3$. To see this, note that even if we explicitly show \NPhness for any number of constants $r_1,\ldots,r_k$, we can pick a large enough prime number $p$ satisfying $p>r_i+1$ for all $i\in[k]$. Observe that explicitly proving \NPhness for $r_i$-regular graphs implies \NPhness for $r$-regular graphs with $r=-1\mod r_i+1$. Now, the \NPhness for $r=p-1$ is not implied by \NPhness for $r_1,\ldots,r_k$ together with \Cref{cor:inf} since otherwise $p-1=-1\mod r_j+1$ for some $j\in[k]$ which implies that $r_j+1$ divides $p$, contradicting the choice of $p$.
\end{remark}

\subsection{Thresholds bounded by 2 in planar graphs}

Following historical development in the study of \TSSshort, it remains to show the complexity of the problem if all thresholds are bounded by $2$ and the underlying graph is a~unit disk graph.

We first establish the \NPhness for planar graphs with $\Delta G \leq 3$ and then utilize this reduction to show \NPhness for the class of grid graphs. The \NPhness for unit disk graphs follows.

Note that the restriction on maximum degree or the threshold function cannot be further strengthened. For $t(v)\leq 1$ or $\Delta G \leq 2$ the problem is trivial, for $t(v)= 2$ and $\Delta G \leq 3$ the problem is polynomial-time solvable in general graphs~\citep{KynclLV2017}.

\begin{theorem}\label{thm:tss:planar:npc:const}
	\TSS is \NPc even when the underlying graph is planar with maximum degree $\Delta G \leq 3$ and all thresholds are at most $2$.
\end{theorem}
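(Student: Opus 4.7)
\noindent
The plan is to reduce from \rptSAT, which is an \NPh planar SAT variant already defined in the preliminaries and in which each variable appears exactly three times (twice positively, once negatively). Given a formula $\varphi$, I would first apply \Cref{thm:valiant} to the incidence graph $G_\varphi$, which has maximum degree at most $4$ under the \rptSAT restriction, to obtain in polynomial time a rectilinear embedding $\mathcal{E}$ of polynomial area. The resulting drawing serves as a scaffold on which variable gadgets replace variable-vertices, clause gadgets replace clause-vertices, and wire gadgets run along the polygonal chains induced by edges. Because $\mathcal{E}$ is planar and axes-parallel, the final graph $G$ is planar by construction, and keeping each gadget internally of degree at most $4$ yields $\Delta G \leq 4$.

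\noindent
For the gadgets, I would proceed as follows. A \textbf{variable gadget} for $x_i$ is a small symmetric structure (for example, a short cycle with two distinguished ``ports'') on which any target set of the prescribed local cost must activate exactly one of the two ports, corresponding to setting $x_i$ to true or false; once a port is activated, all internal vertices of the gadget can propagate to fully infected via threshold-$2$ rules, but activating neither port leaves the gadget stuck. A \textbf{wire gadget} is a path of low-degree vertices with carefully assigned thresholds in $\{1,2\}$ that transmits an ``infected'' signal from one endpoint to the other along the polygonal chain $\mathcal{E}(\{x_i, C_j\})$; the length of each wire is chosen to match the edge's chain length. A \textbf{clause gadget} for $C_j$ is a small threshold-$2$ widget connected to the three incoming literal wires, designed so that it completes its infection iff at least one of its three input wires delivers a signal (a three-input OR with local fan-in $\leq 4$). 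Negation is implemented once, at the interface of a variable gadget with its unique negative-occurrence wire, rather than inside the wire.

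\noindent
The budget $k$ is set to be exactly the number of initial activations forced by the variable gadgets (one port per variable), plus whatever additive constant is dictated by the internal structure of the gadgets. I would then prove the two directions. For the forward direction, given a satisfying assignment $\pi$, activating in each variable gadget the port corresponding to $\pi(x_i)$ yields a target set of the right size: every wire from a satisfied literal carries infection to its clause gadget, every clause gadget then completes its own infection (since at least one incoming wire fires), and the back-propagation through the wires and the opposite ports of each variable gadget infects everything else. For the reverse direction, by a local analysis of each gadget, any target set of size at most $k$ must activate exactly one port in each variable gadget; reading off the induced truth assignment and observing that no clause gadget can be fully infected from only unsatisfied wires (together with the tightness of $k$) shows $\varphi$ is satisfiable. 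Membership in \NP follows from \Cref{obs:tss_in_np}.

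\noindent
The main obstacle is engineering the gadgets so that the three constraints --- planarity, $\Delta G \leq 4$, and thresholds at most $2$ --- hold simultaneously while preserving the equivalence. In particular, the variable gadget must force a binary choice using only threshold-$1$ and threshold-$2$ vertices arranged in a planar way of degree at most $4$, and the clause gadget must realize a three-input OR while bootstrapping its own internal vertices without overshooting the degree bound (a three-input OR is naturally implementable with threshold-$2$ logic but tends to push internal degrees up, so the three wires must enter through separate degree-$\leq 2$ ``ports'' that fan in carefully). Once the gadgets are stable under these constraints, the remaining work --- checking that every auxiliary vertex eventually gets infected, that the budget is tight, and that the whole construction is polynomial in $|\varphi|$ via the polynomial area bound of \Cref{thm:valiant} --- is routine.
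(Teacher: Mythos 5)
Your high-level strategy coincides with the paper's: a reduction from \rptSAT with a variable gadget that forces exactly one of two ``ports'' into the target set, a clause gadget acting as an OR of its incoming literals, a budget of one unit per variable, and a back-propagation argument to infect the unchosen half of each variable gadget. However, the proposal has a genuine gap: the gadgets themselves are never constructed, and you explicitly defer their engineering as ``the main obstacle.'' That obstacle \emph{is} the theorem. In particular, the variable gadget must simultaneously satisfy three properties that are in tension: (a) it cannot be activated from outside even if all of its external neighbours are infected (this is what forces at least one target-set vertex per gadget and makes the budget $k=n$ tight); (b) from either port, the corresponding literal vertex becomes active; and (c) once a clause feeds activation back into the \emph{opposite} literal vertex, the entire gadget floods. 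A ``short cycle with two ports'' does not obviously achieve (a) and (c) together with thresholds in $\{1,2\}$ and degree at most $4$; the paper's gadget needs an asymmetric structure (a threshold-$2$ diamond on the $T_i$ side and a threshold-$1$ path guarded by a threshold-$2$ endpoint on the $F_i$ side), and crucially exploits the \rptSAT promise that each variable occurs twice positively and once negatively to budget the degrees of $t_i$ and $f_i$. Your reverse direction (``by a local analysis of each gadget, any target set of size at most $k$ must activate exactly one port'') is asserted, not proved, and it is false for a generic gadget; it only holds once property (a) is established.

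Two further points. First, your clause gadget is described as a ``threshold-$2$ widget'' realizing a three-input OR; an OR of three inputs is a single vertex of threshold $1$ (as in the paper), whereas a threshold-$2$ vertex with three inputs computes ``at least two,'' so as stated the clause gadget would not certify satisfiability by a single true literal. Second, invoking \Cref{thm:valiant} and routing signals through wire gadgets along the rectilinear chains is unnecessary for the planar case and actively complicates the reverse direction: wire vertices become additional places where a size-$n$ target set could hide, so you would need an extra normalization argument pushing target-set vertices off the wires. The paper keeps direct edges from the literal vertices to the clause vertices and argues planarity of the gadget-substituted incidence graph directly; the rectilinear embedding is only needed later, to pass from planar graphs to grid graphs (\Cref{thm:tss_npc_constant_grid}).
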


\subsection*{Construction}

The reduction is from the \rptSAT. Let $\varphi$ be an instance of \rptSAT and let $\operatorname{Var}(\varphi)$ denote the set of variables of $\varphi$ and $\mathcal{C}(\varphi)$ the set of clauses of $\varphi$. Recall that each clause consists of at most $3$ literals and each variable occurs exactly three times in $\varphi$ -- twice as positive literal and once as negative literal. The reduction consists of two types of gadgets:

\begin{description}
	\item[Variable gadget] Given a~variable~$x\in\operatorname{Var}(\varphi)$, the \emph{variable gadget} for $x$ is the planar graph depicted in \Cref{fig:variable_gadget}. We refer to this graph as $\operatorname{VG}(x)$. For simplicity of notation we write $V(x)$ instead of $V(\operatorname{VG}(x))$ for the set of vertices of $\operatorname{VG}(x)$. The notable vertices of the gadget are $T_x$, $F_x$, $t_x^1$, $t_x^2$, and $f_x$. The idea is that the vertices $T_x$ and $F_x$ stand for the truth assignment of variable $x$ while the vertices $t_x^1$, $t_x^2$ and $f_x$ represent the two positive and one negative occurrence of the variable $x$. These vertices serve to connect the variable gadget with the respective clause gadgets.
	\item[Clause gadget] Given a clause $C\in\mathcal{C}(\varphi)$, the clause gadget for $C$ consists of single vertex $w_C$ which is connected to the corresponding literal vertices that are contained in $C$. We refer to this gadget as $\operatorname{CG}(C)$.
\end{description}

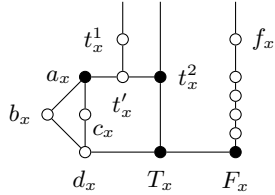
\begin{figure}[ht]
	\centering
	\begin{tikzpicture}[every node/.style={draw,circle,inner sep=1.5pt}]
		\tikzstyle{f2} = [fill=black]
		\node[f2,label=0:{\small$t_x^2$}] (ti2) at (0,0){};
		\node[label=270:{\small$t_x'$}] (tip) at (-0.5,0){};
		\node[label=180:{\small$t_x^1$}] (ti1) at (-0.5,0.5){};
		\node[f2,label=270:{\small$T_x$}] (Ti) at (0,-1){};
		\node[label=0:{\small$f_x$}] (fi) at (1,0.5){};
		\node[]    (fi1) at (1,0) {};
		\node[]    (fi2) at (1,-0.25) {};
		\node[]    (fi3) at (1,-0.5) {};
		\node[]    (fi4) at (1,-0.75) {};
		\node[f2,label=270:{\small$F_x$}] (Fi) at (1,-1){};
		
		\node[f2,label=180:{\small$a_x$}] (ai) at (-1,0){};
		\node[label=180:{\small$b_x$}]     (bi) at (-1.5,-0.5){};
		\node[label=330:{\small$c_x$}]     (ci) at (-1,-0.5){};
		\node[label=270:{\small$d_x$}]     (di) at (-1,-1){};
		\draw (ai) -- (tip);
		\draw (tip) -- (ti2);
		\draw (ai) -- (bi);
		\draw (ai) -- (ci);
		\draw (ci) -- (di);
		\draw (bi) -- (di);
		\draw (di) -- (Ti);
		\draw (Ti) -- (Fi);
		\draw (Fi) -- (fi4);
		\draw (fi4) -- (fi3);
		\draw (fi3) -- (fi2);
		\draw (fi2) -- (fi1);
		\draw (fi1) -- (fi);
		\draw (ti2) -- (Ti);
		\draw (tip) -- (ti1);
		\draw (ti1) -- (-0.5,1);
		\draw (ti2) -- (0,1);
		\draw (fi) -- (1,1);
	\end{tikzpicture}
	\caption{Schematic representation of the variable gadget $\operatorname{VG}(x)$ for variable $x$. The filled vertices have threshold $2$, while the white vertices have threshold $1$. Note also that the half-edges illustrate the fact that the gadget is connected with the rest of the graph only via $t_x^1,t_x^2$ and $f_x$.}
	\label{fig:variable_gadget}
\end{figure}

We are now ready to construct an instance $(G,t,k)$ of \TSS. Start with the incidence graph $G_\varphi$. For every variable $x$ we replace the vertex $v_x$ by the variable gadget $\operatorname{VG}(x)$ and we identify each clause vertex $v_{C}$ with the vertex $w_C$, i.e., with the gadget $\operatorname{CG}(C)$. Next, for all variables $x$, we connect all literal vertices of $\operatorname{VG}(x)$ with the corresponding clause gadgets. More precisely, if $C_1,C_2$ are the clauses where $x$ occurs as positive literal and $C_3$ is the clause where $x$ occurs as negative literal, we add edges $\{t_x^1,w_{C_1}\}$, $\{t_x^2,w_{C_2}\}$ and $\{f_x,w_{C_3}\}$.

It remains to set the thresholds and~$k$. In the variable gadgets the filled vertices have threshold equal to~$2$, while the white vertices have threshold equal to~$1$. In the clause gadgets we set $t(w_C)=1$ for all clauses $C$. Finally, we set $k = n$.

Observe that $G$ is a planar graph. To see this, start with a planar drawing of $G_\varphi$ and replace the vertices of $G_\varphi$ with the corresponding gadgets. The only problem could be with the edges coming from the vertices $t_x^1$, $t_x^2$ and $f_x$. However, for a variable $x$ that occurs in clauses $C_1$, $C_2$ and $C_3$, no matter what the radial order\footnote{More precisely, the order of the points corresponding to the vertices $v_{C_1}$, $v_{C_2}$, $v_{C_3}\in V(G_\varphi)$ given by radially sorting them around the point corresponding to the vertex $v_{x}\in V(G_\varphi)$ in the original drawing.} of the vertices $w_{C_1}$, $w_{C_2}$, $w_{C_3}$ is (with respect to the planar drawing of $G_\varphi$), it is always possible to draw the edges from $t_x^1$, $t_x^2$ and $f_x$ to the corresponding clause gadgets in such a way that we do not create any crossings. We can always, without loss of generality, swap $w_{C_1}$ with $w_{C_2}$, i.e. instead of having the edges $\{t_x^1,w_{C_1}\}$, $\{t_x^2,w_{C_2}\}$ we can have the pair of edges $\{t_x^1,w_{C_2}\}$, $\{t_x^2,w_{C_1}\}$ and if needed the edge coming from $t_x^1$ can encircle the entire gadget in the drawing and leave the gadget to the right of the edge coming from $f_x$.

Moreover, we have $\Delta G \leq 3$, and the thresholds are at most $2$, as promised.

Before showing the equivalence of the instances $(G,t,k)$ and $\varphi$, we establish some basic properties of the variable gadget. Properties of the clause gadget are clear, since it is a single vertex with threshold $1$.

\begin{lemma}\label{lem:gadget_properties}
	Let $x\in \operatorname{Var}(\varphi)$ be a variable. The gadget $\operatorname{VG}(x)$ has the following properties:
	\begin{enumerate}[label=\roman*)]
		\item \label{gadget_1} If the vertex $T_x$ is active, then after $5$ rounds, the vertices $a_x,b_x,c_x,d_x,t_x',t_x^1,t_x^2$ are necessarily active.
		\item \label{gadget_2} If the vertex $F_x$ is active, then after $5$ rounds, all vertices on the $f_x$-$F_x$-path are necessarily active.
		\item \label{gadget_3} The vertices $T_x$ and $F_x$ never become active unless some vertex $v\in  V(x)$ is active.
		
		\item \label{gadget_4} If the vertex $F_x$ is active and the neighbors of $t_x^1$ and $t_x^2$ outside $\operatorname{VG}(x)$ become active, then all vertices in $\operatorname{VG}(x)$ are eventually active.
		
		\item \label{gadget_5} If the vertex $T_x$ is active and the neighbor of $f_x$ outside $\operatorname{VG}(x)$ becomes active, then all vertices in $\operatorname{VG}(x)$ are eventually active.
	\end{enumerate}
\end{lemma}
\begin{proof}
	The properties \ref{gadget_1} and \ref{gadget_2} clear from the construction of the gadget. To prove \ref{gadget_3}, suppose that no vertices inside $\operatorname{VG}(x)$ are active and consider the extreme case that all the vertices in $N(V(x))\setminus V(x)$ are active. The vertices $t_x^1,t_x',t_x^2$ become active. However, both $a_x$ and $T_x$ have threshold $2$ and both of them have only one active neighbor since $d_x,b_x,c_x$ are not active by assumption. Hence, $T_x$ never becomes active. Similarly, if the neighbor of $f_x$ outside $V$ is active, it only activates the neighbor of $F_x$ with threshold one but $F_x$ has threshold $2$ so he doesn't get active as $T_x$ is not active either. Thus $F_x$ will not become active.
	
	We now turn our attention to property \ref{gadget_4}. Suppose that $F_x$ is active and the neighbors of $t_x^1$ and $t_x^2$ outside $V$ become active in round $r$. Then, in round $r+1$ the vertex $t_x^1$ becomes active and in round $r+2$ the vertex $t_x'$ becomes active. In round $r+3$ the vertex $t_x^2$ becomes active and since $T_x$ now has two active neighbors it becomes active. Finally, the vertices $d_x,b_x,c_x$ and $a_x$ become active. Note that the vertices on $F_x$-$f_x$-path are also active since $F_x$ is active. Thus, all vertices of $V(x)$ are active.
	
	It remains to prove property \ref{gadget_5}. Suppose that $T_x$ is active and $f_x$ becomes active in round $r$. At latest in round $r+5$ the vertex $F_x$ becomes active. By property \ref{gadget_1} the remaining vertices of the gadget become active.
\end{proof}

We now establish the equivalence between the formula $\varphi$ and the constructed instance $(G,t,k)$.

\begin{claim}\label{claim:equivalencedir1}
	If $\varphi$ is satisfiable, then $(G,t,k)$ is a \emph{yes}-instance of \TSSshort.
\end{claim}
\begin{proof}
	Let $\varphi$ be satisfiable and let $\pi$ be a satisfying assignment for $\varphi$. We create a target set $S$ as follows. For each variable $x$ we add either $T_x$ if $\pi(x)=1$ or $F_x$ if $\pi(x) = 0$. Observe that $|S|=n=k$. It remains to show that $S$ is a target set.
	
	To see this, observe that for any variable $x$ by the properties \ref{gadget_1} and \ref{gadget_2} from \Cref{lem:gadget_properties} $T_x$ activates $t_x^1$ and $t_x^2$ and $F_x$ activates $f_x$ in $5$ rounds. In the sixth round, all clauses become active. Indeed, because $\pi$ is a satisfying assignment, all of them become active. Finally, by applying the properties \ref{gadget_4} and \ref{gadget_5} from \Cref{lem:gadget_properties}, since all neighbors of $t_x^1$, $t_x^2$ and $f_x$ outside $V(x)$ are active for all $i$, all remaining vertices in all variable gadgets are eventually active. Thus, $S$ is a target set and $(G,t,k)$ is a \emph{yes}-instance.
\end{proof}

\begin{claim}\label{claim:equivalencedir2}
	If $(G,t,k)$ is a \emph{yes}-instance of \TSSshort, then $\varphi$ is satisfiable.
\end{claim}

Before proving \Cref{claim:equivalencedir2} we first make several observations about the structure of the solution to the instance $(G,t,k)$. In what follows, we denote by $S$ an arbitrary target set in $G$ of size at most $k$. Recall that $k=n$.

\begin{observation}\label{obs:vertex_in_vgadget}
	For every variable $x$ we have $S\cap V(x)\neq \emptyset$.
\end{observation}
\begin{proof}
	Suppose otherwise, and let $x$ be a variable such that $S\cap V(x)=\emptyset$. Note that by property \ref{gadget_3} from \Cref{lem:gadget_properties}, the vertices $T_x$ and $F_x$ never become active even if all the vertices outside $V(x)$ are active. This contradicts the assumption that $S$ is a target set.
\end{proof}

\begin{observation}\label{obs:size_of_S_in_gadget}
	For all variables $x$ we have $|S\cap V(x)| = 1$.
\end{observation}
\begin{proof}
	 By \Cref{obs:vertex_in_vgadget}, $S$ must contain at least one vertex from each variable gadget. On the other hand, if there is a variable gadget containing at least $2$ vertices from $S$ then, by the pigeonhole principle, there is a variable gadget containing no vertices from $S$, contradicting \Cref{obs:vertex_in_vgadget} as there are exactly~$n$ variable gadgets.
\end{proof}

\begin{observation}\label{obs:the_vertex_in_gadget}
	There exists a target set $S'$ such that for all variables $x$ we have $S'\cap \{T_x,F_x\}\neq \emptyset$ and $|S'| = |S|$.
\end{observation}
\begin{proof}
	Process the variable gadgets independently one by one. Let $\operatorname{Var}(\varphi)=\{x_1,\ldots,x_n\}$. Formally\footnote{We use superscripts to avoid confusion with the activation process.}, we build target sets $S^i$ for $i=0,1,\ldots,n$ and set $S'=S^n$. Start with $S^0 = S$. Let $i\geq 1$ and consider the variable gadget $\operatorname{VG}(x_i)$. If $S^{i-1}\cap \{T_{x_i},F_{x_i}\}\neq \emptyset$, there is nothing to do, i.e., we set $S^i = S^{i-1}$. Otherwise, observe that $S\cap V(x_i)=S_{i-1}\cap V(x_i)$. By \Cref{obs:size_of_S_in_gadget} we a unique vertex $u_{x_i}\in S\cap V(x_i)$. We distinguish two cases:
	\begin{description}
		\item[Case 1] The vertex $u_{x_i}$ lies on the $f_{x_i}$-$F_{x_i}$-path in $\operatorname{VG}(x_i)$. Note that we can replace $u_{x_i}$ by $F_{x_i}$ and this does not change the fact that $u_{x_i}$ eventually becomes active by property \ref{gadget_2} from \Cref{lem:gadget_properties}. In this case we set $S^i = S^{i-1}\setminus \{u_{x_i}\} \cup \{F_{x_i}\}$.
		\item[Case 2] The vertex $u_i$ is one of $a_{x_i},b_{x_i},c_{x_i},d_{x_i},t_{x_i}',t_{x_i}^1,t_{x_i}^2$. Note that we can replace $u_{x_i}$ by $T_{x_i}$ and this does not change the fact that $u_{x_i}$ eventually becomes active by property \ref{gadget_1} from \Cref{lem:gadget_properties}. In this case we set $S^i = S^{i-1}\setminus \{u_{x_i}\} \cup \{T_{x_i}\}$.
	\end{description}
	This finishes the proof of \Cref{obs:the_vertex_in_gadget}.
\end{proof}

We are now ready to give proof of \Cref{claim:equivalencedir2}.

\begin{proof}[of \Cref{claim:equivalencedir2}]
	Let $(G,t,k)$ be a \emph{yes}-instance and let $S$ be a target set in $G$ of size at most $k$. By \Cref{obs:the_vertex_in_gadget} we can assume, without loss of generality, that for all variables $x$ the unique vertex $u_x$ in $S\cap V(x)$ is either $T_x$ or $F_x$. We now construct a satisfying assignment $\pi$ for $\varphi$ in the obvious way. We set $\pi(x) = 0$ if $u_x=F_x$ and $\pi(x)= 1$ if $u_x=T_x$. The only thing left is to show that $\pi$ is indeed a satisfying assignment for $\varphi$.
	
	For the sake of contradiction, suppose that $\pi$ is not a satisfying assignment. Thus, there is a clause $C$ not satisfied by $\pi$. Note that $w_C\notin S$ because otherwise there is a variable gadget $\operatorname{VG}(x)$ with $S\cap V(x) = \emptyset$ which contradicts \Cref{obs:vertex_in_vgadget}. Since $S$ is a target set and $w_C \notin S$ and $t(w_C)=1$ there must be a round $r$ in which one of the neighbors of $w_C$ becomes active. We show that this is impossible.
	
	We handle the positive and negative literals inside $C$ separately. 
	
	Let $x$ be a variable occurring as positive literal in $C$ and let $y$ be a variable occurring as a negative literal in $C$. By assumption we have $\pi(x)=0$ and $\pi(y) = 1$.

	Consider the connection of $w_C$ to the variable gadget $\operatorname{VG}(x)$. Since $x$ occurs as positive literal in $C$, the vertex $w_C$ is connected to $t_x^1$ or $t_x^2$. By construction of $\pi$ we have $S\cap V(x)=\{F_x\}$. The only way that $t_x^1$ or $t_x^2$ could be activated is via $T_x$. However $T_x$ has only one active neighbor and since no other vertices from $\operatorname{VG}(x)$ are in $S$, $t_x^1$ nor $t_x^2$ get activated. Note that even if $w_C$ is connected to $t_x^2$ and the vertex $t_x^1$ gets activated via the other clause connected to $t_x^1$, this doesn't activate the vertex $t_x^2$. On the other hand if $w_C$ was connected with $\operatorname{VG}(x)$ via $t_x^1$, activation of the clause connected via $t_x^2$ won't even activate the vertex $t_x^2$, thus $t_x^1$ is neither active.
	
	Finally, consider the connection of $w_C$ to the variable gadget $\operatorname{VG}(y)$. The vertex $w_C$ is connected to~$f_x$ and $S\cap V(y)=\{T_y\}$. The only way that $f_x$ is activated is via $F_x$. However $F_x$ has only one active neighbor and by the same argument as above, $f_x$ doesn't get activated.
	
	To conclude, $w_C$ does not get activated and this contradicts the fact that $S$ is a target set, thus $\pi$ is satisfying assignment for $\varphi$.
\end{proof}

We can finally give proof of  \Cref{thm:tss:planar:npc:const}.

\begin{proof}[of \Cref{thm:tss:planar:npc:const}]
	Reduce from \rptSAT which is \NPh~\citep{DahlhausJPSY1994}. Let $\varphi$ be an instance of \rptSAT consisting of $n$ variables and $m$ clauses. Construct an instance $(G,t,k)$ from $\varphi$ as described above. Combining \Cref{claim:equivalencedir1,claim:equivalencedir2}, the instances $\varphi$ and $(G,t,k)$ are equivalent. To conclude, it remains to say that the resulting graph $G$ has exactly $m+14n$ vertices, thus the reduction is indeed polynomial.
\end{proof}

\subsection{Thresholds bounded by 2 in Grid Graphs and Unit Disk Graphs}

In the previous section we showed \NPhness of \TSS when the underlying graph is planar and has maximum degree at most~$3$ and the thresholds are at most $2$. We now utilize this result to show \NPhness in the same setting for the class of grid graphs. Note that hardness for unit disk graphs follows since by \Cref{lem:grids_are_udg} every grid graph is a unit disk graph. Let us begin with a few observations about how edge subdivisions affect target sets.

\begin{observation}\label{obs:threshold_one_vertices}
	Let $G=(V,E)$ be a~graph and $t\colon V \rightarrow \mathbb{N}$ a~threshold function, $S\subseteq V$ a~target set, and let $v\in S$ be a~vertex with $t(v)\leq 1$ and $\deg (v) \geq 1$. Then for any $u\in N(v)$, the set $S\setminus \{v\}\cup \{u\}$ is also a~target set.
\end{observation}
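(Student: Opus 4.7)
The plan is to show that replacing $v$ by one of its neighbors only delays the activation process by at most one round, so the same set of vertices still eventually becomes infected. Let $S' = S \setminus \{v\} \cup \{u\}$ and denote by $(S_i)_{i\geq 0}$ and $(S'_i)_{i\geq 0}$ the activation sequences started from $S$ and $S'$, respectively. The goal is to verify that $\bigcup_{i\geq 0} S'_i = V$.

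The key observation is that after one round of the process from $S'$, the vertex $v$ must already be active. Since $\deg v \geq 1$, the neighbor $u$ exists, $u \in S'_0$, and $u \in N(v)$, so $|N(v)\cap S'_0| \geq 1 \geq t(v)$, which puts $v$ into $S'_1$. Together with the trivial inclusion $S\setminus \{v\} \subseteq S'_0 \subseteq S'_1$, this yields $S'_1 \supseteq S = S_0$.

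Next, I would appeal to the monotonicity of the activation process with respect to its initial set: if $A \subseteq B$, then applying one round of activation preserves that containment, because the set of newly activated vertices is determined by counting active neighbors, and this count is monotone in the current active set. A straightforward induction then gives $S'_{i+1} \supseteq S_i$ for every $i \geq 0$. Because $S$ is a target set, $\bigcup_i S_i = V$, so also $\bigcup_i S'_i = V$, proving that $S'$ is a target set.

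There is no real obstacle here; the only point requiring a little care is that $u$ may or may not already belong to $S$. In either case the containment $S'_1 \supseteq S$ holds (and in fact $|S'| \leq |S|$), so the argument above goes through unchanged. The monotonicity of the activation process is a standard fact that I would either invoke informally or dispatch in one sentence by induction on the round index.
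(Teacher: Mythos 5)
Your proof is correct. The paper states this observation without proof, and your argument — noting that $v$ re-enters the active set after one round because $u\in N(v)\cap S'_0$ and $t(v)\leq 1$, and then invoking monotonicity of the one-round update to get $S'_{i+1}\supseteq S_i$ — is exactly the standard argument the authors leave implicit.
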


\begin{observation}\label{obs:subdivided_edges}
	Let $G=(V,E)$ be a~graph and $t\colon V \rightarrow \mathbb{N}$ a~threshold function, and let~$e\in E$. Let~$G'$ be a~graph that results from~$G$ by subdividing the edge~$e$ once, creating a new vertex $v'\notin V$. Let~$t'\colon V(G')\rightarrow \mathbb{N}$ be defined by $t'(v')=1$ and $t'(v)=t(v)$ for $v\neq v'$. Then the following holds:
	\begin{enumerate}[label=\roman*)]
		\item If $S$ is a~target set for $G$ with respect to $t$, then $S$ is also a~target set for $G'$ with respect to $t'$.\label{subdiv1}
		\item If $S'$ is a~target set for $G'$ with respect to $t'$, then there exists a target set $S$ for $G$ with respect to $t$ and $|S| = |S'|$.\label{subdiv2}
	\end{enumerate}
\end{observation}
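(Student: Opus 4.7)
The plan is to prove both items by an induction on the activation rounds, simulating the process in $G$ and in $G'$ against one another. The key structural fact to exploit is that the new vertex $v'$ has $t'(v')=1$ and precisely the two neighbours $u,w$ (where $e=\{u,w\}$), so $v'$ becomes active in the very next round after either $u$ or $w$ is active, and conversely $v'$ can only have been activated if $u$ or $w$ was already active.

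For \ref{subdiv1}, let $S_0,S_1,\ldots$ be the activation sequence in $G$ starting from $S$, and $S'_0,S'_1,\ldots$ the activation sequence in $G'$ starting from the same set $S$. I will prove by induction that $S_i \subseteq S'_{2i}$ for all $i$. The base case is immediate. For the inductive step I take any $v\in S_{i+1}\setminus S_i$ and split on whether $v\notin\{u,w\}$ (in which case $N_G(v)=N_{G'}(v)$ and $t(v)=t'(v)$, and the inductive hypothesis directly gives $v\in S'_{2i+1}\subseteq S'_{2(i+1)}$) or $v\in\{u,w\}$, say $v=u$. In the latter case, if $w\notin S_i$ then the counts in $G$ and $G'$ at round $2i$ agree on $N(u)$, so $u\in S'_{2i+1}$; if $w\in S_i$, then by the inductive hypothesis $w\in S'_{2i}$, which forces $v'\in S'_{2i+1}$ (as $t'(v')=1$), and then at round $2i+2$ the neighbour $v'$ of $u$ plays the role that $w$ played for $u$ in $G$, giving $u\in S'_{2i+2}$. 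Once all of $V(G)\subseteq S'_{\infty}$, the remaining vertex $v'$ activates in one further round.

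For \ref{subdiv2}, I first normalise $S'$ so that $v'\notin S'$. If $v'\in S'$, I invoke \Cref{obs:threshold_one_vertices} to replace $v'$ by $u$, obtaining a target set for $G'$ of size at most $|S'|$ avoiding $v'$; if $u$ was already in $S'$, I add an arbitrary unused vertex to keep the cardinality equal to $|S'|$. Call the resulting set $S\subseteq V(G)$. Now I claim that $S$ is a target set for $G$ with respect to $t$, proved by showing $S'_i\cap V(G)\subseteq S_i$ for every $i$, where $S'_i$ is the activation in $G'$ from $S$ and $S_i$ the activation in $G$ from $S$. For a newly active $v\in S'_{i+1}\cap V(G)$ outside $\{u,w\}$ the neighbourhoods coincide, so the claim is immediate. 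For $v=u$ the only subtle contribution to $u$'s count comes from $v'$: if $v'\in S'_i$, then $v'$ was activated at some earlier round $j\le i$ because one of $u,w$ lay in $S'_{j-1}$; the case $u\in S'_{j-1}$ gives $u\in S_i$ already by induction, and the case $w\in S'_{j-1}$ gives $w\in S_i$ by induction, and this $w$ contributes to $u$'s count in $G$ through the original edge $\{u,w\}$, exactly replacing the contribution of $v'$ in $G'$.

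The main technical obstacle is the bookkeeping in the case $v=u$: I need to argue that a contribution of $v'$ to $u$'s threshold count in $G'$ can always be traded for a contribution of $w$ to $u$'s count in $G$ (for \ref{subdiv2}), and symmetrically that delays introduced by the intermediate vertex $v'$ do not block activation in $G'$ (for \ref{subdiv1}). Both trades work because $t'(v')=1$, so activation of $v'$ is essentially a one-round propagation from $\{u,w\}$, and they can be carried out uniformly provided we track rounds carefully — the factor of $2$ in the bound $S_i\subseteq S'_{2i}$ being the only slack required.
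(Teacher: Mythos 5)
Your proof is correct; the paper states this as an observation without providing a proof, and your round-by-round simulation — with the factor-of-two slack $S_i\subseteq S'_{2i}$ in part \ref{subdiv1} and the normalisation of $S'$ via \Cref{obs:threshold_one_vertices} followed by the containment $S'_i\cap V(G)\subseteq S_i$ in part \ref{subdiv2} — is exactly the argument the authors implicitly rely on, and all the case analysis around $u$, $w$, and $v'$ checks out. The only cosmetic caveat is that the exact-cardinality requirement $|S|=|S'|$ in part \ref{subdiv2} cannot be met in the degenerate case $S'=V(G')$ (no unused vertex remains to pad with), but that is an imprecision in the statement itself and harmless for how the observation is used, since only $|S|\leq |S'|$ matters in the reductions.
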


\begin{theorem}\label{thm:tss_npc_constant_grid}
	\TSS is \NPc even if the underlying graph is a~grid graph with maximum degree at most $3$ and all thresholds are at most~$2$.
\end{theorem}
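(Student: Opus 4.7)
The plan is to reduce from the setting of \Cref{thm:tss:planar:npc:const}: given an instance $(G,t,k)$ of \TSSshort where $G$ is planar with $\Delta G \leq 4$ and all thresholds are at most $2$, I would build an equivalent instance $(G',t',k)$ in which $G'$ is a grid graph and the thresholds are still at most $2$. The rough idea is to realize $G$ geometrically as a rectilinear embedding and then subdivide every edge along the polygonal chain of its embedding, inserting a new threshold-$1$ vertex at each interior grid point; this will make $G'$ an induced subgraph of a sufficiently large grid without changing the target-set complexity.

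In more detail, I would first invoke \Cref{thm:valiant} to obtain, in polynomial time, a rectilinear embedding $\mathcal{E}$ of $G$ with $\operatorname{Area}(\mathcal{E}) \leq \Oh{|V(G)|^2}$. Then I would scale the embedding by a fixed constant factor (three is sufficient) to create a safety margin between distinct polygonal chains and between chains and non-incident original vertices. For every edge $e\in E(G)$ with (scaled) chain $\mathcal{E}(e)=(p_1,\ldots,p_g)$, I would subdivide $e$ by placing a new vertex at each interior grid point $p_2,\ldots,p_{g-1}$ and extend the threshold function by assigning threshold $1$ to every new vertex. Equivalence of the instances then follows from iterated application of \Cref{obs:subdivided_edges}: every target set for $(G,t)$ yields one for $(G',t')$, and any target set for $(G',t')$ can be converted to a target set for $(G,t)$ of the same size. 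The new thresholds remain in $\{1,2\}$, and since the number of added vertices is at most $\Oh{|V(G)|^2}$, the reduction is polynomial.

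The main obstacle is ensuring that $G'$ is truly an induced subgraph of a grid, not merely a topologically planar subdivision of $G$: without scaling, two polygonal chains of distinct edges could run through adjacent grid cells and thereby create spurious grid edges between their subdivision vertices, which would break the equivalence. Scaling by the factor of three guarantees that the polygonal chains of any two distinct edges are at $\ell_\infty$-distance at least three from one another, and similarly for a chain and any non-incident original vertex, so no unintended adjacencies arise in the grid. Once this geometric separation is in place, the rest of the argument is a bookkeeping combination of \Cref{thm:valiant} and \Cref{obs:subdivided_edges}, and \NP-containment is immediate as usual.
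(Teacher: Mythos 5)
Your proposal is correct and follows essentially the same route as the paper: reduce from \Cref{thm:tss:planar:npc:const}, realize the planar graph via a rectilinear embedding from \Cref{thm:valiant}, subdivide each edge along its polygonal chain with threshold-$1$ vertices, and conclude equivalence by iterating \Cref{obs:subdivided_edges}. The only divergence is how inducedness is enforced --- you rescale the embedding by $3$, while the paper instead subdivides every edge of the chain-subdivided graph once more (in effect a scale-by-$2$ refinement justified by a parity argument); be aware that your claimed $\ell_\infty$-separation of $3$ between chains fails for incident edges at their shared endpoint, where a short local check (the two chains leave the common vertex in distinct axis directions, so their non-shared grid points are at $\ell_1$-distance at least $2$) is still required.
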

\begin{proof}
	We reduce from \TSS on planar graphs with maximum degree $3$ and thresholds at most~$2$. This setting is \NPh by
	\Cref{thm:tss:planar:npc:const}. Let $(G,t,k)$ be an instance of \TSSshort where $G$ is planar and $\Delta G \leq 3$. By \Cref{thm:valiant} there is a rectilinear embedding of $G$ of polynomial area and computable in polynomial time. Fix one such embedding and denote it by $\mathcal{E}$. We now modifiy the graph~$G$ as follows. For an edge $e\in E(G)$, let $\mathcal{E}(e)=(p_1,\ldots,p_g)$. We subdivide the edge $e$ exactly $g-2$ times (see~\Cref{fig:subdividing_edges_to_subgraph}). Note that the case $g=2$ vacuously corresponds to no subdivision. After this step, the graph is (not necessarily induced) subgraph of a grid. To make it induced, we further simultaneously subdivide all edges exactly once (see \Cref{fig:subdividing_edges_to_induced}). After this step, the resulting graph is indeed an induced subgraph of a~grid (i.e., a~grid graph). We set the thresholds of all newly created vertices to $1$. Let $G'$ denote the resulting graph, $t'\colon V(G')\rightarrow \mathbb{N}$ the new threshold function and set $k'=k$.
	\begin{claim}
		$(G,t,k)$ is a~\emph{yes}-instance of \TSS if and only if $(G',t',k')$ is a~\emph{yes}-instance of \TSS.
	\end{claim}
	\begin{proof}
		Let $(G,t,k)$ be a~\emph{yes}-instance and let $S\subseteq V(G)$ be a target set of size at most $k$. Inductively, for each subdivision, apply \ref{subdiv1} from \Cref{obs:subdivided_edges}. It follows that $S$ is also a target set with respect to~$t'$ and is of size at most $k=k'$, thus $(G',t',k')$ is a~\emph{yes}-instance.
		
		On the other hand, let $(G',t',k')$ be a \emph{yes}-instance and let $S'\subseteq V(G')$ be a target set of size at most~$k'$. Inductively for each subdivision, apply \ref{subdiv2} from \Cref{obs:subdivided_edges}. Observe that in each step we get a target set $S$ with the same size. It follows that there is a target set~$S$ with respect to~$t$ of size $k'=k$, thus $(G,t,k)$ is a~\emph{yes}-instance.
	\end{proof}
	
	To finish the proof, we notice that the rectilinear embedding can be computed in polynomial time by \Cref{thm:valiant} and its area is at most $\Oh{|V|^2}$. It follows that in both steps of the construction, we only added at most $\Oh{|V|^2}$ many new vertices, and thus the size of $G'$ is at most polynomial in the size of $G$. This implies that the reduction is polynomial. The theorem follows.
	
\end{proof}

\begin{figure}[ht]
	\centering
	\begin{tikzpicture}[every node/.style={draw,circle,inner sep=1.5pt}, f2/.style={fill=black}]
	
	\node[f2] (w1) at (-5,2){};
	\node[f2] (w2) at (-4,1){};
	\node[f2] (w3) at (-3,1){};
	\node[f2] (w4) at (-5,0){};
	\node[f2] (w6) at (-3,0){};
	\node[f2] (w7) at (-5,-1){};
	\node[f2] (w8) at (-6,-2){};
	\node[f2] (w9) at (-6,1){};
	\draw (w4) -- (-5,1);
	\draw (-5,1) -- (w2);
	\draw (w2) -- (w3);
	\draw (w2) -- (-4,2);
	\draw (-4,2) -- (w1);
	\draw (w9) -- (-6,0);
	\draw (-6,0) -- (w4);
	\draw (w4) -- (w7);
	\draw (w7) -- (-3,-1);
	\draw (-3,-1) -- (w6);
	\draw (-6,-1) -- (w7);
	\draw (-6,-1) -- (w8);

	\draw[->] (-2.25,0) -- (-1.75,0);

	\node[f2] (v1) at (0,2){};
	\node[f2] (v2) at (1,1){};
	\node[f2] (v3) at (2,1){};
	\node[f2] (v4) at (0,0){};
	\node[f2] (v6) at (2,0){};
	\node[f2] (v7) at (0,-1){};
	\node[f2] (v8) at (-1,-2){};
	\node[f2] (v9) at (-1,1){};
	
	\node[] (n1) at (-1,0){};
	\node[] (n2) at (-1,-1){};
	\node[] (n3) at (2,-1){};
	\node[] (n4) at (0,1){};
	\node[] (n5) at (1,2){};
	\node[] (n6) at (1,-1){};

	\draw (v4) -- (n4);
	\draw (n4) -- (v2);
	\draw (v2) -- (v3);
	\draw (v2) -- (n5);
	\draw (n5) -- (v1);
	\draw (v9) -- (n1);
	\draw (n1) -- (v4);
	\draw (v4) -- (v7);
	\draw (v7) -- (n6);
	\draw (n6) -- (n3);
	\draw (n3) -- (v6);
	\draw (n2) -- (v7);
	\draw (n2) -- (v8);

	\end{tikzpicture}
	\caption{Transformation of a~planar graph with maximum degree $3$ into a~subgraph of a~grid by subdividing edges at the internal points of the polygonal chains. Filled vertices correspond to the vertices of the original graph and the white ones are the newly created vertices.}
	\label{fig:subdividing_edges_to_subgraph}
\end{figure}
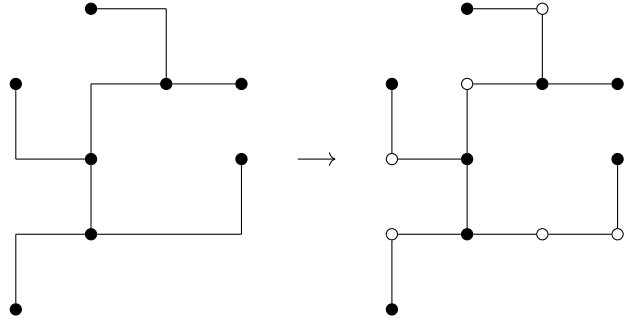

\begin{figure}[ht]
	\centering
	\begin{tikzpicture}[every node/.style={draw,circle,inner sep=1.5pt}, f2/.style={fill=black}]
	
	\node[f2] (v1) at (-5,2){};
	\node[f2] (v2) at (-4,1){};
	\node[f2] (v3) at (-3,1){};
	\node[f2] (v4) at (-5,0){};
	\node[f2] (v6) at (-3,0){};
	\node[f2] (v7) at (-5,-1){};
	\node[f2] (v8) at (-6,-2){};
	\node[f2] (v9) at (-6,1){};
	
	\node[f2] (n1) at (-6,0){};
	\node[f2] (n2) at (-6,-1){};
	\node[f2] (n3) at (-3,-1){};
	\node[f2] (n4) at (-5,1){};
	\node[f2] (n5) at (-4,2){};
	\node[f2] (n6) at (-4,-1){};
	
	\draw (v4) -- (n4);
	\draw (n4) -- (v2);
	\draw (v2) -- (v3);
	\draw (v2) -- (n5);
	\draw (n5) -- (v1);
	\draw (v9) -- (n1);
	\draw (n1) -- (v4);
	\draw (v4) -- (v7);
	\draw (v7) -- (n6);
	\draw (n6) -- (n3);
	\draw (n3) -- (v6);
	\draw (n2) -- (v7);
	\draw (n2) -- (v8);

	\draw[->] (-2.25,0) -- (-1.75,0);

	
	\node[f2] (v1) at (0,2){};
	\node[f2] (v2) at (1,1){};
	\node[f2] (v3) at (2,1){};
	\node[f2] (v4) at (0,0){};
	\node[f2] (v6) at (2,0){};
	\node[f2] (v7) at (0,-1){};
	\node[f2] (v8) at (-1,-2){};
	\node[f2] (v9) at (-1,1){};
	
	\node[f2] (n1) at (-1,0){};
	\node[f2] (n2) at (-1,-1){};
	\node[f2] (n3) at (2,-1){};
	\node[f2] (n4) at (0,1){};
	\node[f2] (n5) at (1,2){};
	\node[f2] (n6) at (1,-1){};

	\node[] (z0) at (0.0,0.5){};
	\draw (v4) -- (z0);
	\draw (z0) -- (n4);
	\node[] (z1) at (0.5,1.0){};
	\draw (n4) -- (z1);
	\draw (z1) -- (v2);
	\node[] (z2) at (1.5,1.0){};
	\draw (v2) -- (z2);
	\draw (z2) -- (v3);
	\node[] (z3) at (1.0,1.5){};
	\draw (v2) -- (z3);
	\draw (z3) -- (n5);
	\node[] (z4) at (0.5,2.0){};
	\draw (n5) -- (z4);
	\draw (z4) -- (v1);
	\node[] (z5) at (-1.0,0.5){};
	\draw (v9) -- (z5);
	\draw (z5) -- (n1);
	\node[] (z6) at (-0.5,0.0){};
	\draw (n1) -- (z6);
	\draw (z6) -- (v4);
	\node[] (z8) at (0.0,-0.5){};
	\draw (v4) -- (z8);
	\draw (z8) -- (v7);
	\node[] (z9) at (0.5,-1.0){};
	\draw (v7) -- (z9);
	\draw (z9) -- (n6);
	\node[] (z10) at (1.5,-1.0){};
	\draw (n6) -- (z10);
	\draw (z10) -- (n3);
	\node[] (z11) at (2.0,-0.5){};
	\draw (n3) -- (z11);
	\draw (z11) -- (v6);
	\node[] (z12) at (-0.5,-1.0){};
	\draw (n2) -- (z12);
	\draw (z12) -- (v7);
	\node[] (z13) at (-1.0,-1.5){};
	\draw (n2) -- (z13);
	\draw (z13) -- (v8);


	\end{tikzpicture}
	\caption{Transformation of a graph, that is (not necessarily induced) subgraph of a~grid into a~graph that is induced subgraph of a~grid (i.e., a~grid graph) by subdividing all edges exactly once. Filled vertices correspond to the vertices of the original graph and the white ones are the newly created vertices.}
	\label{fig:subdividing_edges_to_induced}
\end{figure}
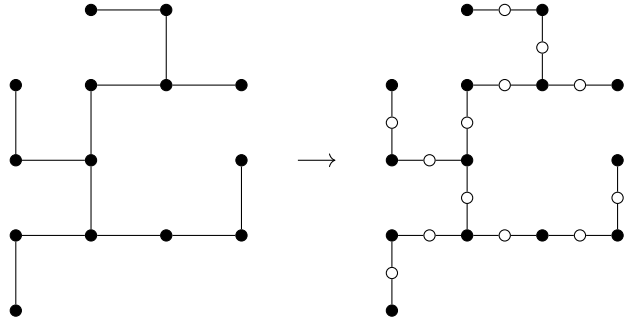

As the class of grid graphs is a~subclass of the unit disk graphs, we obtain \NPhness for unit disk graphs as a~corollary of \Cref{thm:tss_npc_constant_grid}.
\begin{corollary}\label{cor:tss_npc_constant_udg}
	\TSS is \NPc even when the underlying graph is a~unit disk graph, all thresholds are at most~$2$ and the maximum degree is at most $3$.
\end{corollary}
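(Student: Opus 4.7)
The plan is to observe that the statement is an immediate consequence of \Cref{thm:tss_npc_constant_grid} together with the fact that every grid graph admits a unit disk representation. Formally, given a grid graph $G$ viewed as an induced subgraph of some $n\times m$ grid, I would exhibit the unit disk realization obtained by placing a closed disk of diameter $1$ centered at the integer coordinate $(i,j)$ corresponding to each vertex of $G$. Two such disks intersect if and only if the Euclidean distance between their centers is at most $1$, and since their centers lie at distinct integer points, this happens exactly when the centers are adjacent grid points, i.e., when the corresponding vertices are adjacent in the underlying grid. Because $G$ is an \emph{induced} subgraph of the grid, the resulting intersection graph is exactly $G$, witnessing that $G$ is a unit disk graph.

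Given this observation, the reduction is essentially free: the \NPhness reduction from \Cref{thm:tss_npc_constant_grid} produces in polynomial time an instance $(G',t',k')$ of \TSS whose underlying graph $G'$ is a grid graph with $t'(v)\leq 2$ for every vertex $v$. By the preceding paragraph, $G'$ is simultaneously a unit disk graph, and a corresponding unit disk representation can be written down directly from the grid embedding in polynomial time. Therefore the same reduction certifies \NPhness of \TSS on unit disk graphs with thresholds at most $2$. Combining this with membership in \NP (\Cref{obs:tss_in_np}) yields \NP-completeness.

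There is no real obstacle here, since the containment of grid graphs in unit disk graphs is elementary and the rest is just invoking the previous theorem. The only point that warrants a short explicit remark (to stay consistent with the style used elsewhere in the paper) is that the unit disk representation can be produced constructively as part of the reduction, so the hardness does not rely on the \NPhness of recognizing unit disk graphs. This matches the convention noted after \Cref{thm:tss_npc_udg_ut}.
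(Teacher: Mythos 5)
Your proposal is correct and follows exactly the paper's route: the corollary is derived from \Cref{thm:tss_npc_constant_grid} via the containment of grid graphs in unit disk graphs, with your explicit disk representation at integer coordinates merely spelling out that containment. Nothing further is needed.
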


\section{Majority Thresholds}\label{sec:majority_thresholds}
The last natural restriction of the threshold function, which is widely studied in the literature, is the case of majority thresholds, that is, for every $v\in V$ we have $t(v) = \lceil\deg(v)/2\rceil$.

Before delving into the specific graph classes discussed in this work, we first examine how the general case (i.e., when the underlying graph is unrestricted) is proven to be hard. The first proof of \NPhness in this setting is due to \cite{Peleg1996}. We give a sketch of a different proof that could be used to prove hardness in the general case as well, but it is important for us because it an also be used in the case of our graph classes. The idea is inspired by the proof of a related result concerning the inapproximability of \TSS given by \cite{Chen2009}. The idea is as follows. Start with an arbitrary instance of \TSSshort and inspect the vertices that do not have their threshold set to majority. Let $v$ be a vertex satisfying $t(v) \neq \lceil \frac{\deg(v)}{2}\rceil$. If the threshold is larger than majority it suffices to increase its degree by attaching dummy leaf vertices with threshold~$1$~to~$v$. Correctness of this step comes from the fact that there is always an optimal solution of \TSSshort that does not include leafs with threshold~$1$. On the other hand, when~$v$ has threshold smaller than majority, we need to increase it because we cannot safely decrease the degree of~$v$. We make use of a gadget that will supply~$v$ with sufficiently many active neighbors \emph{for free}. This is achieved by the \emph{cherry gadget}. A cherry gadget is a path on three vertices $g^\ell,g^m,g^r$ and is attached to a vertex via the middle vertex $g^m$ (see \Cref{fig:cherry_gadget}). To be more precise, there are two cases:
\begin{description}
	\item[Case 1] If $t(v)>\left\lceil \frac{\deg_G (v)}{2}\right\rceil$, attach $2t(v)-\deg_G(v)$ new vertices with threshold $1$ incident to $v$.
	\item[Case 2] If $t(v)< \left\lceil \frac{\deg_G (v)}{2}\right\rceil$, attach $\deg_G(v)-2t(v)$ cherry gadgets to $v$ as depicted in \Cref{fig:cherry_gadget}.
\end{description}

Note that the vertices $g^m$ in the cherry gadgets have degree $3$ and threshold $2$ and they have only one neighbor in the original graph. This implies that any solution $S$ is forced to contain a vertex from each cherry and there is always a optimal solution containing only the vertex $g^m$. It follows that the cherry gadgets attached to a vertex $v$ will indeed supply $v$ with sufficiently many active neighbors for free.

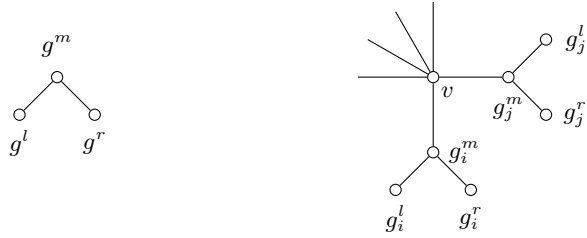
\begin{figure}[ht]
	\centering
	\begin{tikzpicture}[every node/.style={draw,circle,inner sep=1.5pt}]
	\node[label=90:{\small$g^m$}] (gm) at (-5,0){};
	\node[label=270:{\small$g^l$}] (gl) at (-5.5,-0.5){};
	\node[label=270:{\small$g^r$}] (gr) at (-4.5,-0.5){};
	
	\draw (gl) -- (gm);
	\draw (gr) -- (gm);

	\node[label=315:{\small$v$}] (v) at (0,0){};
	
	\node[label=0:{\small$g^m_i$}] (gmi) at (0,-1){};
	\node[label=270:{\small$g^l_i$}] (gli) at (-0.5,-1.5){};
	\node[label=270:{\small$g^r_i$}] (gri) at (0.5,-1.5){};
	
	\node[label=270:{\small$g^m_j$}] (gmj) at (1,0){};
	\node[label=0:{\small$g^l_j$}] (glj) at (1.5,0.5){};
	\node[label=0:{\small$g^r_j$}] (grj) at (1.5,-0.5){};
	
	\draw(v) -- (gmi);
	\draw(v) -- (gmj);
	\draw(gmi) -- (gli);
	\draw(gmi) -- (gri);
	\draw(gmj) -- (glj);
	\draw(gmj) -- (grj);
	
	\draw(v) -- (-1,0);
	\draw(v) -- (0,1);
	\draw(v) -- (-0.87,0.5);
	\draw(v) -- (-0.5,0.87);
	
	\end{tikzpicture}
	\caption{The cherry gadget (on the left). Connection of two cherry gadgets to a~vertex $v\in V(G)$ with original degree $\deg_G(v)=4$ and original threshold $t(v)=1$. The new threshold of $v$ is $t'(v)=t(v)+\deg_G(v)-2t(v)=3$ and $\deg_{G'}(v)=6$, thus it is at majority. The half edges going from~$v$ represent connection of~$v$ to the rest of~$G$.}
	\label{fig:cherry_gadget}
\end{figure}

We now utilize these observations to show \NPhness of \TSS under the majority threshold setting for our desired graph classes. We start with the planar graphs.

\begin{corollary}\label{cor:tss_npc_majority_planar}
	\TSS is \NPc under the majority threshold setting even when the underlying graph is planar with maximum degree $\Delta G \leq 4$.
\end{corollary}
\begin{proof}
	The reduction is the same as in proof \Cref{thm:tss:planar:npc:const}. We use similar gadgets but we adjust vertices whose threshold is not at majority by use of additional leaf vertices or cherry gadgets as described above. More specifically, the problematic vertices are the vertices $w_C$ in the clause gadgets. Recall that $t(w_C)=1$. If $\deg w_C \leq 2$, threshold is at majority. If $\deg w_C = 3$, we attach one cherry gadget to $w_C$.
	
	In the variable gadgets $\operatorname{VG}(x)$ vertices without majority thresholds are the vertices $F_x$ -- here attaching a leaf suffices and vertices $t_x'$ and $d_x$ -- here we attach one cherry to each. The rest of the proof is similar to the proof of \Cref{thm:tss:planar:npc:const}. Note that the cherry gadgets inevitably increased the maximum degree to~$4$.
\end{proof}

It is now straightforward to prove the \NPhness for the majority setting in the remaining graph classes. That is, grid graphs and unit disk graphs. We employ the same idea as in the proof of \Cref{thm:tss_npc_constant_grid}.

\begin{corollary}\label{cor:tss_npc_majority_grid}
	\TSS is \NPc under the majority threshold setting even if the underlying graph is a grid graph.
\end{corollary}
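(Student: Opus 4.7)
The plan is to mimic the reduction used in the proof of \Cref{thm:tss_npc_constant_grid}, only changing the starting point of the reduction from \Cref{thm:tss:planar:npc:const} to \Cref{thm:tss_npc_majority_planar}. That is, I would reduce from \TSSshort on planar graphs of maximum degree at most~$4$ under the majority threshold setting, an instance $(G,t,k)$ whose \NPhness is established in~\Cref{thm:tss_npc_majority_planar}.

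Applying \Cref{thm:valiant} to $G$, I obtain in polynomial time a rectilinear embedding $\mathcal{E}$ of polynomial area. Then, exactly as in the proof of \Cref{thm:tss_npc_constant_grid}, for each edge $e\in E(G)$ with $\mathcal{E}(e)=(p_1,\ldots,p_g)$, I subdivide~$e$ exactly $g-2$ times following the polygonal chain, which yields a (not necessarily induced) subgraph of a grid; then I subdivide every remaining edge once more to obtain a genuine induced subgraph of a grid. Let $G'$ denote the resulting grid graph, and let $t'$ be the threshold function on $V(G')$ defined by $t'(v)=t(v)$ for every $v\in V(G)$ and $t'(w)=1$ for every newly introduced subdivision vertex $w$. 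Finally, set $k'=k$.

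The crucial observation is that this construction preserves the majority threshold property. Subdividing an edge does not change the degrees of the endpoints of that edge, so every original vertex $v\in V(G)\subseteq V(G')$ satisfies $\deg_{G'}(v)=\deg_G(v)$ and hence $t'(v)=t(v)=\lceil \deg_G(v)/2\rceil = \lceil \deg_{G'}(v)/2\rceil$. Every newly inserted subdivision vertex has degree exactly~$2$ in $G'$, so setting its threshold to $1=\lceil 2/2\rceil$ is precisely the majority value. Thus $t'$ is a majority threshold function on $G'$, as required.

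The equivalence between $(G,t,k)$ and $(G',t',k')$ is established by iteratively applying \Cref{obs:subdivided_edges} for each individual subdivision step, exactly as in the proof of \Cref{thm:tss_npc_constant_grid}. Polynomiality of the reduction follows from $\operatorname{Area}(\mathcal{E})\leq\Oh{|V(G)|^2}$, since the number of subdivision vertices introduced is bounded by the total length of the polygonal chains, which is at most $\Oh{|V(G)|^2}$. I do not anticipate any real obstacle here, as the degree-preservation property of edge subdivision makes the majority constraint fall into place automatically; the only detail to double-check is that new degree-$2$ vertices with threshold~$1$ are indeed consistent with the majority rule, which they are.
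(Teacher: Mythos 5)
Your proposal is correct and follows essentially the same route as the paper: start from the planar majority-threshold instance of \Cref{thm:tss_npc_majority_planar}, apply the subdivision construction of \Cref{thm:tss_npc_constant_grid}, and observe that subdivision vertices have degree $2$ with threshold $1$ while original degrees are unchanged, so the majority property is preserved. The paper's proof is a terser version of exactly this argument.
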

\begin{proof}
	Apply the same reduction as in the proof of \Cref{thm:tss_npc_constant_grid}, but start from a planar instance with majority thresholds and $\Delta G \leq 4$ which is \NPh by \Cref{cor:tss_npc_majority_planar}. Observe that a~vertex created by subdividing an edge has degree $2$ and all other degrees are unchanged. Notice that since the thresholds of the vertices created by the subdivision is $1$, the new threshold function is indeed majority.
\end{proof}

As the class of grid graphs is a~subclass of unit disk graphs (\Cref{lem:grids_are_udg}), we also obtain \NPhness under the majority setting in unit disk graphs.

\begin{corollary}\label{cor:tss_npc_majority_udg}
	\TSS is \NPc under the majority threshold setting even if the underlying graph is a unit disk graph and $\Delta G \leq 4$.
\end{corollary}

\begin{remark}\label{remark:majority3}
	Note that unless $\P=\NP$ the maximum degree $\Delta G \leq 4$ cannot be reduced to $3$ in the \NPhness of \TSSshort in the majority threshold setting even in general graphs. Observe that for $\Delta G \leq 3$ and $t(v)= \lceil \frac{\deg(v)}{2}\rceil$ we can always remove vertices with $\deg(v)\leq 1$ (there is always an optimal solution not containing these) and obtain an instance with $t(v)=\deg(v)-1$ which is equivalent to the \textsc{Feedback Vertex Set} problem. However, this problem is solvable in polynomial time in subcubic graphs~\citep{TakaokaTU2012}.
\end{remark}

\section{Exact thresholds}\label{sec:exact_thresholds}
In previous sections, we established \NPhness of \TSS in all commonly studied restrictions of the threshold function -- constant, unanimous, and majority in the classes of unit disk and planar graphs. Our proofs provided \NPhness not only for concrete graph classes but also for general graphs with very small degree. To sum up, we have hardness result for \TSS when the maximum degree is $3$ and thresholds are at most $2$ and the graph is a grid graph. Were the thresholds exactly $2$ and the maximum degree $3$, the problem is polynomial-time solvable as shown by~\cite{KynclLV2017}. In their work, they also produce a reduction showing \NPhness in the case when thresholds are \emph{exactly} $2$ and maximum degree $4$. In this section, we show that this result can indeed by achieved even in the class of unit disk graphs. For all other values of $c > 2$ the hardness of setting with $t(v) = c$ is in fact equivalent to the hardness of $c$-regular \IS or \VC, thus \NPh in general graphs.

We have a slightly weaker result for the class of unit disk graphs when the thresholds are \emph{exactly} $c$. We rely on the result about \NPhness of \IS on regular unit disk graphs (\Cref{thm:is:npc:regular,cor:inf}).

Let us start with a simple lemma.

\begin{lemma}\label{lem:reduction_2}
	Let $(G,t,k)$ be an instance of \TSS. Then there is an equivalent instance $(G',t',k')$ with $t'(v)\leq \deg_{G'}(v)$ for all $v\in V(G')$.
\end{lemma}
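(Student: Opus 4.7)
The plan is to identify the obstruction (vertices whose thresholds already exceed their degrees), observe that such vertices must lie in every target set, and then ``absorb'' them into a new target set while compensating their neighbours.

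\textbf{Key observation and construction.} Let $V_{>} = \{v\in V(G) : t(v) > \deg_G(v)\}$. For every $v\in V_{>}$, even the fully activated open neighbourhood satisfies $|N(v)\cap S_{i-1}|\leq \deg_G(v) < t(v)$, so $v$ is never added to $S_i$ by the activation rule. Consequently every target set $S$ for $(G,t,k)$ must contain $V_{>}$. Define $G' = G - V_{>}$, $k' = k - |V_{>}|$, and
\[
t'(u) = \max\bigl\{t(u) - |N_G(u)\cap V_{>}|,\; 0\bigr\} \qquad \text{for every } u\in V(G').
\]

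\textbf{Verifying the degree bound.} For $u\in V(G')$ we have $u\notin V_{>}$, hence $t(u)\leq \deg_G(u)$. Moreover $\deg_{G'}(u) = \deg_G(u) - |N_G(u)\cap V_{>}|$. Subtracting $|N_G(u)\cap V_{>}|$ from both sides of $t(u)\leq \deg_G(u)$ therefore gives $t(u)-|N_G(u)\cap V_{>}|\leq \deg_{G'}(u)$, and in the clamped case the bound $0\leq \deg_{G'}(u)$ is trivial. Hence $t'(u)\leq \deg_{G'}(u)$ as required.

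\textbf{Equivalence.} The two instances have the same solutions up to the deterministic inclusion of $V_{>}$. I would formalize this by proving the invariant $A_i' = A_i \setminus V_{>}$ by induction on~$i$, where $A_i$ and $A_i'$ denote the round-$i$ active sets of the activation processes on $(G,t)$ from $S$ and on $(G',t')$ from $S\setminus V_{>}$ respectively. The base case is immediate. For the inductive step, since $V_{>}\subseteq S\subseteq A_{i-1}$,
\[
|N_G(u)\cap A_{i-1}| \;=\; |N_{G'}(u)\cap A_{i-1}'| \,+\, |N_G(u)\cap V_{>}|,
\]
and subtracting $|N_G(u)\cap V_{>}|$ from each side of the activation inequality for $u\in V(G')$ translates the $G$-condition $t(u)\leq |N_G(u)\cap A_{i-1}|$ into the $G'$-condition $t'(u)\leq |N_{G'}(u)\cap A_{i-1}'|$ (whenever $t'(u)>0$). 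Conversely, any target set $S'$ for $(G',t',k')$ yields $S := S'\cup V_{>}$ of size at most $k'+|V_{>}|=k$, and the same invariant shows that $S$ activates all of $V(G)$.

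\textbf{Main obstacle.} The only delicate point is the clamping $t'(u) := \max\{\cdot,\,0\}$ when $t(u) < |N_G(u)\cap V_{>}|$. In that regime $u$ becomes active in $G$ already in round~$1$ purely from its $V_{>}$-neighbours, and setting $t'(u)=0$ correctly forces $u$ into $A_1'$ as well; outside this regime the arithmetic translation in the preceding paragraph is exact. Once this corner case is dispatched the induction goes through mechanically, yielding the desired equivalent instance with $t'(v)\leq \deg_{G'}(v)$ for all $v\in V(G')$.
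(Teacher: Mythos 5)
Your proposal is correct and follows essentially the same approach as the paper: identify the vertices with $t(v)>\deg(v)$, observe they must belong to every target set, delete them while decrementing their neighbours' thresholds (clamped at $0$) and reducing the budget accordingly. The paper performs this one vertex at a time and asserts equivalence without detail, whereas you do it in a single batch and supply the round-by-round invariant explicitly; the end result is the same instance.
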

\begin{proof}
	If $v$ is a~vertex with threshold $t(v)>\deg (v)$, then it must be included in any target set. We thus set $G'=G-v$, decrease the threshold value of all neighbors of $v$ by $1$ (if not already at zero) and $k'=k-1$. Certainly the new instance is equivalent to $(G,t,k)$. Repeat this step until there are no vertices with threshold $t(v)>\deg (v)$.
\end{proof}

\begin{theorem}\label{thm:exact_udg}
	For infinitely many constants $c$ \TSS is \NPc when restricted to the class of unit disk graphs and the thresholds are exactly $c$. In particular, the claim holds for \mbox{$c=2,3,4$}. Moreover, for $c=2$ the hardness holds even if maximum degree of the graph is at most~$4$. 
\end{theorem}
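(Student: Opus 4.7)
The plan is to split the proof by the value of $c$ and reduce, wherever possible, to the \NPhness of \IS on regular unit disk graphs via Lemma \ref{lem:tss:unanimous:vc:eq}.

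First, for $c=3$ and $c=4$ I would argue directly. By Theorem \ref{thm:is:npc:regular}, \IS, and hence \VC, is \NPh on $r$-regular unit disk graphs for $r\in\{3,4\}$. Given such an $r$-regular instance $G$, I would form the \TSS instance $(G,t,k)$ with $t(v)=r$ for every vertex. On an $r$-regular graph this threshold function is simultaneously the constant-$r$ function and the unanimous function, so by Lemma \ref{lem:tss:unanimous:vc:eq} the instance is equivalent to the \VC instance on $G$. This yields \NPhness of \TSS with exact threshold $c=r$ on unit disk graphs.

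Next, to obtain infinitely many constants, I would apply Corollary \ref{cor:inf}: combined with the base cases $r=3,4$, it yields \NPhness of \IS on $(q(r+1)-1)$-regular unit disk graphs for every positive integer $q$ and every $r\in\{3,4\}$. Repeating the previous argument, \TSS with thresholds exactly $c=q(r+1)-1$ is \NPh on unit disk graphs, which covers infinitely many values of $c$ (in particular $c\in\{3,4,7,9,11,14,15,19,\ldots\}$).

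The remaining case $c=2$ is the main obstacle, because $2$-regular graphs are disjoint unions of cycles and both \VC and \TSS are polynomial-time solvable there, so the above strategy collapses. My plan here is to start from the \NPhness of \TSS with $t(v)=2$ and $\Delta G\leq 4$ due to \cite{KynclLV2017} (whose reduction is essentially \textsc{Irreversible $2$-Conversion Set}, where the thresholds are already exactly $2$) and embed their instances into unit disk graphs. Following the blueprint of Theorem \ref{thm:tss_npc_constant_grid} together with the unit-disk representation used in Theorem \ref{thm:is:npc:regular}, I would fix a rectilinear embedding of their graph via Theorem \ref{thm:valiant} and then realize each edge as a chain of unit disks. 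The subtle point, and the step I expect to be hardest to get right, is that the subdivision vertices introduced for the sake of unit-disk realizability must themselves have threshold exactly~$2$; a naive single subdivision with threshold~$1$ is not permitted. I would handle this by inserting chains of length divisible by a small constant and, in the same style as the accounting in the \IS reduction of Theorem \ref{thm:is:npc:regular}, absorbing each chain into the budget by forcing a controlled number of its vertices into the seed set, so that the threshold-$2$ chain propagates activation between its endpoints in exactly the same way the original edge would.
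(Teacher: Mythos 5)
Your treatment of $c\geq 3$ and of the infinite family matches the paper: the paper likewise sets all thresholds to $r$ on an $r$-regular unit disk graph (so that ``exactly $r$'' coincides with unanimous), invokes \Cref{lem:tss:unanimous:vc:eq}, and gets the infinite family from \Cref{cor:inf}. The divergence, and the problem, is the case $c=2$.

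Your $c=2$ gadget does not work. If you replace an edge $\{u,v\}$ by a path $u\,x_1\cdots x_\ell\,v$ in which every internal vertex has degree $2$ and threshold exactly $2$, the chain cannot transmit activation conditionally on $u$: the vertex $x_\ell$ adjacent to $v$ activates only once \emph{both} $x_{\ell-1}$ and $v$ are already active, so it can never be the neighbour that pushes $v$ over its threshold unless $x_\ell$ is placed in the seed set --- and in that case it contributes to $v$ unconditionally, whether or not $u$ ever activates. Moreover, any two adjacent unseeded internal vertices deadlock (each waits for the other), so the seeded chain vertices must essentially cover the path. No choice of a ``controlled number'' of seeded chain vertices recovers the semantics of the original edge; the gadget either severs the edge or short-circuits it. A secondary issue is that you apply \Cref{thm:valiant} to the instances of \cite{KynclLV2017}, but that theorem requires planarity, which those instances are not claimed to have. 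The paper avoids all of this by reducing, for $c=2$, from \TSS with \emph{majority} thresholds on grid graphs (\Cref{cor:tss_npc_majority_grid}): there every vertex already has threshold $2$ except the degree-$\leq 2$ vertices with threshold $1$, and each such vertex is repaired by attaching a pendant leaf with threshold $2$ (which exceeds its degree, hence is forced into every target set by \Cref{lem:reduction_2}), raising the vertex's threshold to $2$ and the budget by one; the leaf is realized as a unit disk offset by $\varepsilon=\frac{1}{5}$ in a free grid direction. You would need to adopt this (or design a genuinely different threshold-$2$ transmission gadget) to close the $c=2$ case.
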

\begin{proof}
	For $c\geq 3$ we reduce from \IS restricted to instances where the underlying graph is $c$-regular unit disk graph where $c>0$ and $c\equiv -1 \mod 4$ or $c\equiv -1 \mod 5$. \NPhness of this setting is implied by \Cref{cor:inf}. The same hardness holds for the \VC problem. By using \Cref{lem:tss:unanimous:vc:eq} the hardness for \TSS follows.
	
	For $c=2$ we reduce from \TSS with majority thresholds on grid graphs. \NPhness of this setting is implied by \Cref{cor:tss_npc_majority_grid}. Let $(G,t,k)$ be such instance and let $V(G)=\{v_1,\dots,v_n\}$. We create a~new instance $(G',t',k')$ as follows. We are aiming at $t'(v)=2$ for all $v\in V(G')$.
	
	First, we obtain a disk representation $\mathcal{D}=\{D_1,\ldots, D_n\}$ for $G$ as in proof of \Cref{lem:grids_are_udg}. Recall that grid graphs are \emph{induced} subgraphs of a grid. Recall that the disks have diameter $1$ and the center of the disk $D_i$ corresponds to the grid point of the vertex $v_i$. Now, we fix vertices~$v_i$ with threshold~$1$ by attaching a~leaf vertex~$v_i'$ with threshold~$2$ to $v_i$ and we increase the threshold of $v_i$ by $1$. In this way we have $t'(v_i)=t'(v_i')=2$. Let $z$ denote the number of vertices $v_i\in V(G)$ with $t(v_i) = 1$. We set $k'=k+z$. Let $G'$ be the newly created graph.
	
	\begin{claim*}
		The instances $(G,t,k)$ and $(G',t',k')$ are equivalent.
	\end{claim*}
	\begin{proof}
		Observe that by applying \Cref{lem:reduction_2} to the instance $(G',t',k')$ we obtain precisely the instance $(G,t,k)$. In the other direction, it is sufficient to add all the leafs $v_i'$ with threshold $2$. The claim follows.
	\end{proof}
	
	It remains to say how to realize the attachment of a~leaf vertex in the unit disk representation. Let~$s_i\in\mathbb{R}^2$ be the center of~$D_i$ and let $\varepsilon=\frac{1}{5}$. Observe that the representation satisfies: Every two disks have at most~$1$ point in common. Let $v_i$ satisfy $t(v_i)=1$. Thus, $\deg_G v_i \in \{1,2\}$ because~$t$~is majority. As all disks are embedded in an integer grid and $\deg (v_i) \leq 2$, there exists a~direction $d_i\in \{(0,1),(1,0),(-1,0),(0,-1)\}$ such that $s_i+d_i$ is not a~center of any other disk $D_j$. We add a~new disk~$D_i'$ with diameter $1$ centered at $s_i+\varepsilon d_i$ (see~\Cref{fig:udg_adding_leafs}).
	
	It is not hard to see that $D_i'\cap D_i\neq \emptyset$ and that $D_i'$ does not intersect any other disks. In other words, this exactly corresponds to attaching a~leaf vertex $v_i'$ to $v_i$. We repeat this step for all other vertices $v\in V(G)$ satisfying $t(v)=1$. Observe that the selection\footnote{The only problem might arise in the scenario shown in \Cref{fig:udg_adding_leafs}. In this case $s_i+d_i=s_j+d_j$ and the two red disks might overlap if $\varepsilon$ was chosen too large. In fact, one can compute that any $\varepsilon \in \left(0,1-\frac{1}{\sqrt{2}}\right)$ would suffice.} $\varepsilon=\frac{1}{5}$ ensures that no matter which direction~$d_j$ we choose for any other disk $D_j$, the newly created disk $D_i'$ intersects only the disk $D_i$. This can be checked directly by computing the distances of the centers of the corresponding disks.
	
	To conclude the proof, note that we added at most $1$ new vertex per each original vertex, thus the reduction is indeed polynomial.
\end{proof}

\begin{figure}[ht]
	\centering
	\begin{tikzpicture}
	\draw (0,0) circle (0.5);
	\draw (0,1) circle (0.5);
	\draw (0,-1) circle (0.5);
	\draw (1,1) circle (0.5);
	\draw (1,-1) circle (0.5);
	\draw (-1,-1) circle (0.5);
	\draw (-1,1) circle (0.5);
	\draw[red] (0+0.2,0) circle (0.5);
	\draw[red] (1,0.8) circle (0.5);
	\node[red] (diprime) at (0.2,0) {$D_i'$};
	\node[red] (djprime) at (1.8,0.6) {$D_j'$};
	\node[] (di) at (-0.8,0) {$D_i$};
	\node[] (dj) at (1.8,1) {$D_j$};
	\end{tikzpicture}
	\caption{Attaching leafs to vertices $v_i,v_j$ corresponding to disks $D_i,D_j$. The black disks correspond to the original graph while the two red disks are the newly created leafs. The corresponding directions are $d_i=(1,0)$ and $d_j=(0,-1)$.}
	\label{fig:udg_adding_leafs}
\end{figure}
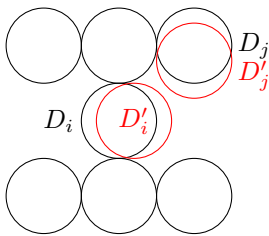

Notice that very same proof works even for planar graphs.

\begin{corollary}
	\TSS is \NPh even when the underlying graph is planar or unit disk graph with $\Delta G \leq 4$ and the thresholds are exactly $2$.
\end{corollary}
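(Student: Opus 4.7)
The plan is to inspect the $c=2$ reduction inside \Cref{thm:exact_udg} and argue that it already produces a graph with $\Delta \leq 4$, and moreover that nothing in that reduction requires the unit disk realization, so the same construction yields a planar graph.

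First, I would recall that the $c=2$ case of \Cref{thm:exact_udg} reduces from \TSSshort with majority thresholds on grid graphs (\NPh by \Cref{cor:tss_npc_majority_grid}) by attaching, to every original vertex $v$ with $t(v)=1$, a single pendant vertex $v'$ with threshold $2$, and raising $t(v)$ to $2$. The equivalence of the resulting instance $(G',t',k')$ with the original $(G,t,k)$ is already established there, and also follows from \Cref{lem:reduction_2} applied in reverse, which erases all the added leaves and restores $(G,t,k)$.

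Second, I would verify the degree bound. Grid graphs satisfy $\Delta G \leq 4$, so I only need to check that attaching the new leaves does not push any degree above $4$. The key observation is that for the majority threshold, $t(v)=1$ forces $\lceil \deg_G(v)/2 \rceil = 1$, and hence $\deg_G(v) \in \{1,2\}$. Adding one pendant to such a $v$ gives it degree at most $3$. Vertices with $t(v)=2$ are left untouched and already satisfy $\deg_G(v) \leq 4$, while every newly introduced leaf has degree $1$. Therefore $\Delta G' \leq 4$, and by construction all thresholds in $G'$ are exactly $2$.

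Finally, the unit disk realization was handled explicitly (and is the only genuinely technical part) in \Cref{thm:exact_udg} via the $\varepsilon=\tfrac{1}{5}$ offset trick, so no extra work is needed there. For the planar case I would simply note that grid graphs are planar and attaching pendant vertices preserves planarity, so the very same reduction, ignoring the geometric realization, already outputs a planar graph with $\Delta \leq 4$ and thresholds exactly $2$. There is no real obstacle; the only point to verify is the elementary degree calculation above.
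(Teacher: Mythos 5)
Your proposal is correct and is essentially identical to the paper's own justification: the paper likewise observes that the $c=2$ construction from \Cref{thm:exact_udg} already yields a graph of maximum degree at most $4$ (since a vertex with majority threshold $1$ has degree at most $2$ before the pendant is attached), and that the same reduction, read without the disk realization, outputs a planar graph. Your explicit degree calculation is a welcome bit of extra detail, but the route is the same.
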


\begin{remark*}
	We remark that the first nontrivial constant for which we do not have an \NPhness result for \TSS in the class of unit disk graphs and thresholds set exactly to $c$, is $c=5$. It is implied by the fact that the proof is based heavily on the \NPhness of \IS, for which the situation is pretty much the same. Refer back to~\Cref{remark:is} for more details.
\end{remark*}

%
%
\section{Conclusions}\label{sec:conclusions}
In this paper, we showed that \TSS is computationally hard even on very simple geometric graph classes such as unit disk graphs and grid graphs. We completed the complexity picture in all commonly studied settings of the threshold function -- unanimous, constant and majority. We showed that \TSS is \NPh in the class of planar graphs or unit disk graphs when thresholds are at most $2$ and maximum degree is at most $3$ or thresholds are exactly $2$ and maximum degree is at most $4$. These hardness results are tight in the sense that further restrictions of the maximum degree or the threshold function yields polynomial-time solvable instances.

As a corollary to hardness in planar graphs with small degree, we showed that \TSS is \NPh even in the class of grid graphs when thresholds are at most $2$ and maximum degree at most $3$. We then utilized these results to show \NPhness of \TSS in the majority threshold setting in grid graphs and unit disk graphs with $\Delta G \leq 4$. As noted in \Cref{remark:majority3}, further reducing the maximum degree to $3$ is not possible unless $\P=\NP$.

Throughout the reductions we
utilized the so-called rectilinear embedding which might be useful in reductions of other problems on grid graphs.

\subsection*{Future directions and open questions}
The first obvious question is, whether the setting $t(v)=2$ is also \NPh in the class of grid graphs. The approach used in \Cref{thm:exact_udg} is not applicable in grid graphs. It might work if the subgraph of the grid need not be induced, however one would need to be careful where to place the leafs.

\begin{question*}
	What is the complexity of \TSS in the class of grid graphs with $t(v)=2$?
\end{question*}

If even this setting turns out to be hard, we would still like to know whether there is some reasonable restriction of the structure of the grid graph or the threshold function that makes \TSSshort tractable in grid graphs.

Another question that remains open even after this paper is related to the computational complexity of the \TSS problem on interval graphs. It was known that \TSSshort is polynomial-time solvable on interval graphs when the threshold function is bounded by a constant. As a corollary to tractability of \VC in interval graphs, the unanimous threshold setting is also polynomial-time solvable in this class. We conjecture that the problem should be tractable with majority thresholds; however, we are more skeptical in the case of general thresholds.

\begin{question*}
	What is the complexity of \TSS on the class of interval graphs in the majority and unrestricted threshold setting?
\end{question*}

Finally, an interesting open problem, which is slightly unrelated to \TSS is about the \NPhness of the \IS problem in $r$-regular unit disk graphs for all constants $r\geq 5$. In this work, we established \NPhness for $r=3,4$ and infinitely many (but not all) constants (refer to \Cref{remark:is} for details).

\begin{question*}
	Is \IS \NPh when restricted to the class of $r$-regular unit disk graphs for all constants $r\geq 5$?
\end{question*}

Similar question follows with the \NPhness of exact threshold setting in the class of unit disk graphs as our proof relies on the result about \IS.

\begin{question*}
	Given any $c\geq 2$, is \TSS \NPh even when restricted to the class of unit disk graphs and all thresholds are exactly $c$?
\end{question*}

\subsubsection*{Acknowledgments.}
The authors are grateful to the reviewer’s valuable comments that improved the manuscript.
The authors acknowledge the support of the Czech Science Foundation Grant No. 22-19557S. Michal Dvořák and Šimon Schierreich were additionally supported by the Grant Agency of the Czech Technical University in Prague, grant No. SGS23/205/OHK3/3T/18. Michal Dvořák was also supported by the Student Summer Research Program 2021 of FIT CTU in Prague.

\FloatBarrier
\bibliographystyle{abbrvnat}
\bibliography{references}

\begin{thebibliography}{41}
\providecommand{\natexlab}[1]{#1}
\providecommand{\url}[1]{\texttt{#1}}
\expandafter\ifx\csname urlstyle\endcsname\relax
  \providecommand{\doi}[1]{doi: #1}\else
  \providecommand{\doi}{doi: \begingroup \urlstyle{rm}\Url}\fi

\bibitem[{Akhoondian Amiri}(2021)]{Amiri2021}
S.~{Akhoondian Amiri}.
\newblock A note on the fine-grained complexity of {MIS} on regular graphs.
\newblock \emph{Information Processing Letters}, 170:\penalty0 106123, Sept.
  2021.
\newblock ISSN 0020-0190.
\newblock \doi{10.1016/j.ipl.2021.106123}.

\bibitem[Banerjee et~al.(2022)Banerjee, Mathew, and Panolan]{BanerjeeMP22}
S.~Banerjee, R.~Mathew, and F.~Panolan.
\newblock Target set selection parameterized by vertex cover and more.
\newblock \emph{Theory of Computing Systems}, 66\penalty0 (5):\penalty0
  996--1018, Sept. 2022.
\newblock ISSN 1432-4350.
\newblock \doi{10.1007/S00224-022-10100-0}.

\bibitem[Ben-Zwi et~al.(2011)Ben-Zwi, Hermelin, Lokshtanov, and
  Newman]{BenZwiHLN2011}
O.~Ben-Zwi, D.~Hermelin, D.~Lokshtanov, and I.~Newman.
\newblock Treewidth governs the complexity of target set selection.
\newblock \emph{Discrete Optimization}, 8\penalty0 (1):\penalty0 87--96, Feb.
  2011.
\newblock ISSN 1572-5286.
\newblock \doi{10.1016/j.disopt.2010.09.007}.
\newblock Special Issue on {P}arameterized Complexity of Discrete Optimization.

\bibitem[Bessy et~al.(2019)Bessy, Ehard, Penso, and Rautenbach]{BessyEPR2019}
S.~Bessy, S.~Ehard, L.~D. Penso, and D.~Rautenbach.
\newblock Dynamic monopolies for interval graphs with bounded thresholds.
\newblock \emph{Discrete Applied Mathematics}, 260:\penalty0 256--261, May
  2019.
\newblock ISSN 0166-218X.
\newblock \doi{10.1016/j.dam.2019.01.022}.

\bibitem[Bliznets and Sagunov(2023)]{BliznetsS2023}
I.~Bliznets and D.~Sagunov.
\newblock Solving target set selection with bounded thresholds faster than
  2\({}^{\mbox{n}}\).
\newblock \emph{Algorithmica}, 85\penalty0 (2):\penalty0 384--405, Sept. 2023.
\newblock ISSN 0178-4617.
\newblock \doi{10.1007/S00453-022-01031-W}.

\bibitem[Breu and Kirkpatrick(1998)]{BreuK1993}
H.~Breu and D.~G. Kirkpatrick.
\newblock Unit disk graph recognition is {NP}-hard.
\newblock \emph{Computational Geometry}, 9\penalty0 (1--2):\penalty0 3--24,
  Jan. 1998.
\newblock ISSN 0925-7721.
\newblock \doi{10.1016/S0925-7721(97)00014-X}.
\newblock {S}pecial Issue on Geometric Representations of Graphs.

\bibitem[Centeno et~al.(2011)Centeno, Dourado, Penso, Rautenbach, and
  Szwarcfiter]{Centeno2011}
C.~C. Centeno, M.~C. Dourado, L.~D. Penso, D.~Rautenbach, and J.~L.
  Szwarcfiter.
\newblock Irreversible conversion of graphs.
\newblock \emph{Theoretical Computer Science}, 412\penalty0 (29):\penalty0
  3693--3700, July 2011.
\newblock ISSN 0304-3975.
\newblock \doi{10.1016/j.tcs.2011.03.029}.

\bibitem[Chen(2009)]{Chen2009}
N.~Chen.
\newblock On the approximability of influence in social networks.
\newblock \emph{{SIAM} Journal on Discrete Mathematics}, 23\penalty0
  (3):\penalty0 1400--1415, 2009.
\newblock ISSN 0895-4801.
\newblock \doi{10.1137/08073617X}.

\bibitem[Chiang et~al.(2013)Chiang, Huang, Li, Wu, and Yeh]{ChiangHBWY2013}
C.-Y. Chiang, L.-H. Huang, B.-J. Li, J.~Wu, and H.-G. Yeh.
\newblock Some results on the target set selection problem.
\newblock \emph{Journal of Combinatorial Optimization}, 25\penalty0
  (4):\penalty0 702--715, June 2013.
\newblock ISSN 1382-6905.
\newblock \doi{10.1007/s10878-012-9518-3}.

\bibitem[Chopin et~al.(2014)Chopin, Nichterlein, Niedermeier, and
  Weller]{ChopinNNW2014}
M.~Chopin, A.~Nichterlein, R.~Niedermeier, and M.~Weller.
\newblock Constant thresholds can make target set selection tractable.
\newblock \emph{Theory of Computing Systems}, 55\penalty0 (1):\penalty0 61--83,
  Sept. 2014.
\newblock ISSN 1433-0490.
\newblock \doi{10.1007/s00224-013-9499-3}.

\bibitem[Chu and Lin(2023)]{ChuL2023}
H.~Chu and B.~Lin.
\newblock {FPT} approximation using treewidth: {C}apacitated vertex cover,
  target set selection and vector dominating set.
\newblock In S.~Iwata and N.~Kakimura, editors, \emph{Proceedings of the 34th
  International Symposium on Algorithms and Computation, {ISAAC}~'23}, volume
  283 of \emph{Leibniz International Proceedings in Informatics}, pages
  19:1--19:20. Schloss Dagstuhl - Leibniz-Zentrum f{\"{u}}r Informatik, Dec.
  2023.
\newblock ISBN 978-3-95977-289-1.
\newblock \doi{10.4230/LIPIcs.ISAAC.2023.19}.

\bibitem[Chuzhoy and Khanna(2024)]{ChuzhoyK2024}
J.~Chuzhoy and S.~Khanna.
\newblock A faster combinatorial algorithm for maximum bipartite matching.
\newblock In D.~P. Woodruff, editor, \emph{Proceedings of the 2024 Annual
  ACM-SIAM Symposium on Discrete Algorithms, SODA~'24}, pages 2185--2235, 2024.
\newblock \doi{10.1137/1.9781611977912.79}.

\bibitem[Cicalese et~al.(2014)Cicalese, Cordasco, Gargano, Milani{\v{c}}, and
  Vaccaro]{CicaleseCGMV2014}
F.~Cicalese, G.~Cordasco, L.~Gargano, M.~Milani{\v{c}}, and U.~Vaccaro.
\newblock Latency-bounded target set selection in social networks.
\newblock \emph{Theoretical Computer Science}, 535:\penalty0 1--15, May 2014.
\newblock ISSN 0304-3975.
\newblock \doi{10.1016/j.tcs.2014.02.027}.

\bibitem[Clark et~al.(1990)Clark, Colbourn, and Johnson]{ClarkCJ1990}
B.~N. Clark, C.~J. Colbourn, and D.~S. Johnson.
\newblock Unit disk graphs.
\newblock \emph{Discrete Mathematics}, 86\penalty0 (1--3):\penalty0 165--177,
  Dec. 1990.
\newblock ISSN 0012-365X.
\newblock \doi{10.1016/0012-365X(90)90358-O}.

\bibitem[Dahlhaus et~al.(1994)Dahlhaus, Johnson, Papadimitriou, Seymour, and
  Yannakakis]{DahlhausJPSY1994}
E.~Dahlhaus, D.~S. Johnson, C.~H. Papadimitriou, P.~D. Seymour, and
  M.~Yannakakis.
\newblock The complexity of multiterminal cuts.
\newblock \emph{{SIAM} Journal on Computing}, 23\penalty0 (4):\penalty0
  864--894, 1994.
\newblock ISSN 0097-5397.
\newblock \doi{10.1137/S0097539792225297}.

\bibitem[Deligkas et~al.(2024)Deligkas, D{\"{o}}ring, Eiben, Goldsmith, and
  Skretas]{DeligkasEGS2023}
A.~Deligkas, M.~D{\"{o}}ring, E.~Eiben, T.~Goldsmith, and G.~Skretas.
\newblock Being an influencer is hard: The complexity of influence maximization
  in temporal graphs with a fixed source.
\newblock \emph{Information and Computation}, 299:\penalty0 105171, Aug. 2024.
\newblock ISSN 0890-5401.
\newblock \doi{10.1016/j.ic.2024.105171}.

\bibitem[Dreyer and Roberts(2009)]{DreyerR2009}
P.~A. Dreyer and F.~S. Roberts.
\newblock Irreversible $k$-threshold processes: {G}raph-theoretical threshold
  models of the spread of disease and of opinion.
\newblock \emph{Discrete Applied Mathematics}, 157\penalty0 (7):\penalty0
  1615--1627, Apr. 2009.
\newblock ISSN 0166-218X.
\newblock \doi{10.1016/j.dam.2008.09.012}.

\bibitem[Dvo{\v{r}}{\'{a}}k et~al.(2023)Dvo{\v{r}}{\'{a}}k, Knop, and
  Schierreich]{DvorakKS23}
M.~Dvo{\v{r}}{\'{a}}k, D.~Knop, and {\v{S}}.~Schierreich.
\newblock Establishing herd immunity is hard even in simple geometric networks.
\newblock In M.~Dewar, P.~Pralat, P.~Szufel, F.~Th{\'{e}}berge, and M.~Wrzosek,
  editors, \emph{Proceedings of the 18th International Workshop on Algorithms
  and Models for the Web Graph, {WAW}~'23}, volume 13894 of \emph{Lecture Notes
  in Computer Science}, pages 68--82, Cham, May 2023. Springer.
\newblock ISBN 978-3-031-32295-2.
\newblock \doi{10.1007/978-3-031-32296-9_5}.

\bibitem[Dvo\v{r}\'{a}k et~al.(2022)Dvo\v{r}\'{a}k, Knop, and
  Toufar]{DvorakKT2022}
P.~Dvo\v{r}\'{a}k, D.~Knop, and T.~Toufar.
\newblock Target set selection in dense graph classes.
\newblock \emph{{SIAM} Journal on Discrete Mathematics}, 36\penalty0
  (1):\penalty0 536--572, 2022.
\newblock ISSN 0895-4801.
\newblock \doi{10.1137/20M1337624}.

\bibitem[Farber(1982)]{Farber1982}
M.~Farber.
\newblock Independent domination in chordal graphs.
\newblock \emph{Operations Research Letters}, 1\penalty0 (4):\penalty0
  134--138, Sept. 1982.
\newblock ISSN 0167-6377.
\newblock \doi{10.1016/0167-6377(82)90015-3}.

\bibitem[Fleischner et~al.(2010)Fleischner, Sabidussi, and
  Sarvanov]{FleischnerSS2010}
H.~Fleischner, G.~Sabidussi, and V.~I. Sarvanov.
\newblock Maximum independent sets in 3- and 4-regular {H}amiltonian graphs.
\newblock \emph{Discrete Mathematics}, 310\penalty0 (20):\penalty0 2742--2749,
  Oct. 2010.
\newblock ISSN 0012-365X.
\newblock \doi{10.1016/j.disc.2010.05.028}.

\bibitem[Hartmann(2018)]{Hartmann2018}
T.~A. Hartmann.
\newblock Target set selection parameterized by clique-width and maximum
  threshold.
\newblock In A.~M. Tjoa, L.~Bellatreche, S.~Biffl, J.~van Leeuwen, and
  J.~Wiedermann, editors, \emph{Proceedings of the 44th International
  Conference on Current Trends in Theory and Practice of Computer Science,
  {SOFSEM}~'18}, volume 10706 of \emph{Lecture Notes in Computer Science},
  pages 137--149, Cham, Dec. 2018. Springer.
\newblock ISBN 978-3-319-73117-9.
\newblock \doi{10.1007/978-3-319-73117-9_10}.

\bibitem[Hlin\v{e}n\'{y} and Kratochv\'{\i}l(2001)]{HlinenyK2001}
P.~Hlin\v{e}n\'{y} and J.~Kratochv\'{\i}l.
\newblock Representing graphs by disks and balls (a survey of
  recognition-complexity results).
\newblock \emph{Discrete Mathematics}, 229\penalty0 (1):\penalty0 101--124,
  Feb. 2001.
\newblock ISSN 0012-365X.
\newblock \doi{10.1016/S0012-365X(00)00204-1}.

\bibitem[Huson and Sen(1995)]{HusonS1995}
M.~L. Huson and A.~Sen.
\newblock Broadcast scheduling algorithms for radio networks.
\newblock In \emph{Proceedings of the 1995 {IEEE} Military Communications
  Conference, {MILCOM}~'95}, volume~2, pages 647--651. IEEE, Nov. 1995.
\newblock ISBN 0-7803-2489-7.
\newblock \doi{10.1109/MILCOM.1995.483546}.

\bibitem[Kang and M{\"u}ller(2012)]{KangM2012}
R.~J. Kang and T.~M{\"u}ller.
\newblock Sphere and dot product representations of graphs.
\newblock \emph{Discrete {\&} Computational Geometry}, 47\penalty0
  (3):\penalty0 548--568, Apr. 2012.
\newblock ISSN 1432-0444.
\newblock \doi{10.1007/s00454-012-9394-8}.

\bibitem[Keiler et~al.(2023)Keiler, Lima, Maia, Sampaio, and
  Sau]{KeilerLMSS2023}
L.~Keiler, C.~V. G.~C. Lima, A.~K. Maia, R.~M. Sampaio, and I.~Sau.
\newblock Target set selection with maximum activation time.
\newblock \emph{Discrete Applied Mathematics}, 338:\penalty0 199--217, Oct.
  2023.
\newblock ISSN 0166-218X.
\newblock \doi{10.1016/j.dam.2023.06.004}.

\bibitem[Kempe et~al.(2015)Kempe, Kleinberg, and Tardos]{KempeKT2015}
D.~Kempe, J.~Kleinberg, and {\'{E}}.~Tardos.
\newblock Maximizing the spread of influence through a social network.
\newblock \emph{Theory of Computing}, 11\penalty0 (4):\penalty0 105--147, Apr.
  2015.
\newblock ISSN 1557-2862.
\newblock \doi{10.4086/toc.2015.v011a004}.

\bibitem[Knop et~al.(2022)Knop, Schierreich, and Such{\'{y}}]{KnopSS2022}
D.~Knop, {\v{S}}.~Schierreich, and O.~Such{\'{y}}.
\newblock Balancing the spread of two opinions in sparse social networks
  (student abstract).
\newblock In \emph{Proceedings of the 36th {AAAI} Conference on Artificial
  Intelligence, {AAAI}~'22}, pages 12987--12988. {AAAI} Press, 2022.
\newblock \doi{10.1609/aaai.v36i11.21630}.

\bibitem[K{\"o}nig(1931)]{Konig1931}
D.~K{\"o}nig.
\newblock Graphok {\'e}s matrixok.
\newblock \emph{Matematikai és Fizikai Lapok}, 38:\penalty0 116--119, 1931.

\bibitem[Kyn{\v{c}}l et~al.(2017)Kyn{\v{c}}l, Lidick{\'y}, and
  Vysko{\v{c}}il]{KynclLV2017}
J.~Kyn{\v{c}}l, B.~Lidick{\'y}, and T.~Vysko{\v{c}}il.
\newblock {Irreversible $2$-conversion set in graphs of bounded degree}.
\newblock \emph{{Discrete Mathematics \& Theoretical Computer Science}},
  19\penalty0 (3), 2017.
\newblock ISSN 1365-8050.
\newblock \doi{10.23638/DMTCS-19-3-5}.

\bibitem[Mathieson(2010)]{Mathieson2010}
L.~Mathieson.
\newblock The parameterized complexity of editing graphs for bounded
  degeneracy.
\newblock \emph{Theoretical Computer Science}, 411\penalty0 (34--36):\penalty0
  3181--3187, July 2010.
\newblock ISSN 0304-3975.
\newblock \doi{10.1016/j.tcs.2010.05.015}.

\bibitem[Matsui(2000)]{Matsui2000}
T.~Matsui.
\newblock Approximation algorithms for maximum independent set problems and
  fractional coloring problems on unit disk graphs.
\newblock In J.~Akiyama, M.~Kano, and M.~Urabe, editors, \emph{Proceedings of
  the 2nd Japanese Conference on Discrete and Computational Geometry,
  {JCDCG}~'98}, volume 1763 of \emph{Lecture Notes in Computer Science}, pages
  194--200, Berlin, Heidelberg, 2000. Springer.
\newblock ISBN 978-3-540-67181-7.
\newblock \doi{10.1007/978-3-540-46515-7_16}.

\bibitem[Nichterlein et~al.(2013)Nichterlein, Niedermeier, Uhlmann, and
  Weller]{NichterleinNUW2013}
A.~Nichterlein, R.~Niedermeier, J.~Uhlmann, and M.~Weller.
\newblock On tractable cases of target set selection.
\newblock \emph{Social Network Analysis and Mining}, 3:\penalty0 233--256, May
  2013.
\newblock ISSN 1869-5469.
\newblock \doi{10.1007/s13278-012-0067-7}.

\bibitem[Peleg(1996)]{Peleg1996}
D.~Peleg.
\newblock Majority voting, coalitions and monopolies in graphs.
\newblock In N.~Santoro and P.~G. Spirakis, editors, \emph{Proceedings of the
  3rd International Colloquium on Structural Information {\&} Communication
  Complexity, {SIROCCO}~'96}, pages 152--169. Carleton Scientific, 1996.

\bibitem[Richardson and Domingos(2002)]{DomingosR2002}
M.~Richardson and P.~M. Domingos.
\newblock Mining knowledge-sharing sites for viral marketing.
\newblock In \emph{Proceedings of the 8th {ACM} {SIGKDD} International
  Conference on Knowledge Discovery and Data Mining, {KDD~'02}}, pages 61--70,
  New York, NY, USA, July 2002. {ACM}.
\newblock ISBN 978-1-58113-567-1.
\newblock \doi{10.1145/775047.775057}.

\bibitem[Schierreich(2023)]{Schierreich2023}
{\v{S}}.~Schierreich.
\newblock Maximizing influence spread through a dynamic social network (student
  abstract).
\newblock In B.~Williams, Y.~Chen, and J.~Neville, editors, \emph{Proceedings
  of the 37th {AAAI} Conference on Artificial Intelligence, {AAAI}~'23}, pages
  16316--16317. {AAAI} Press, 2023.
\newblock \doi{10.1609/aaai.v37i13.27018}.

\bibitem[Takaoka and Ueno(2015)]{TakaokaU2015}
A.~Takaoka and S.~Ueno.
\newblock A note on irreversible 2-conversion sets in subcubic graphs.
\newblock \emph{{IEICE} Transactions on Information and Systems},
  E98.D\penalty0 (8):\penalty0 1589--1591, Aug. 2015.
\newblock ISSN 0916-8532.
\newblock \doi{10.1587/transinf.2015EDL8021}.

\bibitem[Takaoka et~al.(2012)Takaoka, Tayu, and Ueno]{TakaokaTU2012}
A.~Takaoka, S.~Tayu, and S.~Ueno.
\newblock On minimum feedback vertex sets in graphs.
\newblock In \emph{Proceedings of the 2012 3rd International Conference on
  Networking and Computing, {ICNC}~'12}, pages 429--434. {IEEE}, 2012.
\newblock ISBN 978-1-4673-4624-5.
\newblock \doi{10.1109/ICNC.2012.81}.

\bibitem[{Thirumala Reddy} and {Pandu Rangan}(2011)]{ReddyKR2010}
T.~V. {Thirumala Reddy} and C.~{Pandu Rangan}.
\newblock Variants of spreading messages.
\newblock \emph{Journal of Graph Algorithms and Applications}, 15\penalty0
  (5):\penalty0 683--699, Oct. 2011.
\newblock ISSN 1526-1719.
\newblock \doi{10.7155/JGAA.00244}.

\bibitem[Ueno et~al.(1988)Ueno, Kajitani, and Gotoh]{UenoYS1988}
S.~Ueno, Y.~Kajitani, and S.~Gotoh.
\newblock On the nonseparating independent set problem and feedback set problem
  for graphs with no vertex degree exceeding three.
\newblock \emph{Discrete Mathematics}, 72\penalty0 (1--3):\penalty0 355--360,
  Dec. 1988.
\newblock ISSN 0012-365X.
\newblock \doi{10.1016/0012-365X(88)90226-9}.

\bibitem[Valiant(1981)]{Valiant1981}
L.~G. Valiant.
\newblock Universality considerations in {VLSI} circuits.
\newblock \emph{{IEEE} Transactions on Computers}, 30\penalty0 (2):\penalty0
  135--140, Feb. 1981.
\newblock ISSN 0018-9340.
\newblock \doi{10.1109/TC.1981.6312176}.

\end{thebibliography}

\end{document}